\newcommand{\R}{\mathbb{R}}
\newcommand{\E}{\mathbb{E}}
\newcommand{\N}{\mathbb{N}}
\newcommand{\ccL}{\mathscr{L}}
\newcommand{\BOX}{\mbox{{\ensuremath{\Box}}\hspace{-0.5mm}}}
\renewcommand{\p@enumi}{\thetheorem-}
\numberwithin{figure}{section}%
\numberwithin{table}{section}
 \renewcommand\appendix{\par
   \setcounter{section}{0}%
   \setcounter{subsection}{0}%
   \setcounter{figure}{0}%
   \renewcommand\thesection{\Alph{section}}%
   \renewcommand\thefigure{\Alph{section}.\arabic{figure}}}
\newtheorem{theorem}{Theorem}[section]
\newtheorem{proposition}[theorem]{Proposition}
\newtheorem{lemma}[theorem]{Lemma}
\newtheorem{remark}[theorem]{Remark}
\newenvironment{proof}{{\par\noindent\textbf{Proof:}}}{\mbox{}\hfill$\BOX$\\}
\newenvironment{customthm}[1]
  {\innercustomthm}
  {\endinnercustomthm}
\newcommand{\ccF}{{\mathscr F}}
\newcommand{\cA}{\mathcal{A}}
\newcommand{\bbF}{\mathbb{F}}
\newcommand{\ccP}{{\mathscr P}}
\definecolor{tscolor}{rgb}{1.0,0.6,0.0}
\title{Robust deep hedging}
\author{Eva L\"{u}tkebohmert$^{1}$, Thorsten Schmidt$^{2}$, Julian Sester$^{3}${}
}
\begin{document}

\maketitle

\vspace{-6ex}
\begin{center}
\small\textit{$^{1}$Department of Quantitative Finance,\\ Institute for Economic Research, University of Freiburg,\\ Rempartstr. 16,
79098 Freiburg, Germany.\\[2mm]
$^{2}$Department of Mathematical Stochastics, Mathematical Institute,\\
University of Freiburg, Ernst-Zermelo-Str. 1, 79104 Freiburg, Germany.\\[2mm]
$^{3}$NTU Singapore, Division of Mathematical Sciences,\\ 21 Nanyang Link, Singapore 637371}                                                                                                                              \end{center}

\begin{abstract}
We study pricing and hedging under parameter uncertainty for a class of Markov processes which we call  {\it generalized affine processes} and which includes the Black-Scholes model as well as the constant elasticity of variance (CEV) model as special cases. Based on a general  dynamic programming principle, we are able to link the associated nonlinear expectation to a variational form of the Kolmogorov equation which opens the door for fast numerical pricing in the robust framework. 

The main novelty of the paper is that we propose a deep hedging approach which efficiently solves the hedging problem under parameter uncertainty. We numerically evaluate this method on simulated and real data and show that the robust deep hedging outperforms existing hedging approaches, in particular in highly volatile periods. \\[2mm]

\noindent
{\bf Keywords:} 
affine processes, Knightian uncertainty, Kolmogorov equation, deep learning, robust hedging
	
\noindent{{\bf JEL classification:}
C02, C45, G13}

\end{abstract}

\section{Introduction}\label{sec-Intro}

{\it Uncertainty}, as coined by Frank Knight, refers to the case where a number of models (technically: probability measures) are available and one is not able to distinguish between them. This applies for example to the prediction of the evolution of a stock in the future. Even if we have a reliable and rich source of historic information, predicting the future evolution, the future variance or even the whole future distribution is highly complicated. On the one side, this is due to the classical estimation problem: estimated parameters allow for confidence intervals which need to be taken into account for the prediction. On the other side, in particular in financial markets, changes in the underlying dynamics are rather the rule than the exception and additional uncertainty and model risk come into effect, resulting in a widening of confidence intervals.
For pricing, one can efficiently rely on the calibration to option surfaces with all its difficulties. For hedging, when one wants to incorporate the performance under the objective measure and not under the risk-neutral one, this becomes much more challenging. 

Our paper addresses exactly this setting and suggests a deep learning approach for hedging under parameter uncertainty.
The basis for our work is the recently developed class of affine processes under parameter uncertainty, see \cite{FadinaNeufelSchmidt2019}, which we simply call nonlinear affine processes, referring to the associated nonlinear expectation arising from the pricing problem under uncertainty in this class. We extend this approach to those Markovian processes which satisfy
\begin{align}
 dX_t &= (b_0 + b_1 X_t) dt + (a_0 + a_1 X_t)^\gamma dW_t,
\end{align}
where we allow for parameter uncertainty in all the parameters $b_0,b_1,a_0,a_1$, and $\gamma$. We develop the theory for this class of processes which we call nonlinear generalized affine (NGA) processes. The robust pricing problem is solved by utilizing a general dynamic programming principle and establishing the nonlinear Kolmogorov equation, which opens the door for fast (and well-known) numerical approaches. 
In order to solve the hedging problem under parameter uncertainty, we rely on a deep learning approach. To the best of our knowledge, this is the first attempt of this kind.  

We numerically evaluate this method first on simulated data and show that the robust deep hedging outperforms existing hedging approaches when parameter uncertainty is present.
For a realistic data application, we consider the COVID-19 period. 
In this period, stock markets experienced unexpectedly high volatility and variation in the price paths, which poses a huge challenge to classical hedging approaches. When applying robust methods, the first challenge is to find reliable estimates for the parameter intervals specifying the uncertainty in the considered model class. We propose a sliding-window maximum-likelihood estimation approach for this whose maximal and minimal parameter estimates lead to the targeted intervals. With this uncertainty specification at hand, we are able to show that in the considered data examples the robust deep hedging approach leads to a remarkably smaller hedging error in comparison to classical hedging strategies.

Our paper relates to a rich stream of literature motivated by \emph{parameter uncertainty}, dating back to \cite{AvellanedaLevyParas.95,wilmott1998uncertain}, and \cite{FouqueRen2014}. More recent contributions are \cite{cohen2017european,barnett2020pricing, aksamit2020robust, cheridito2017duality, akthari2020generalized}. In the context of option pricing and efficient hedging, \cite{bouchard2015arbitrage,hou2018robust} developed approaches respecting an ambiguity set of possible underlying probability measures %
and \cite{acciaio2016model, beiglbock2013model, cox2011robust, dolinsky2014martingale, hobson1998robust,lutkebohmert2019tightening,nadtochiy2017robust, neufeld2021model} introduced approaches to entirely model-free option pricing and to model-free super-replication. 

Further, our paper contributes to the recent literature on deep learning approaches in hedging, starting from the seminal work \cite{buehler2019deep} and followed by \cite{gumbel2020machine,  cuchiero2020generative, cao2021deep, carbonneau2021deep, carbonneau2021equal, chen2021deep, eckstein2021robust, gierjatowicz2020robust, horvath2021deep, neufeld2021deep}, amongst many others (see also \cite{ruf2020neural} for a review).

The remainder of the paper is organized as follows: In Section \ref{sec-GeneralizedAffine} we introduce the theoretical basis for NGA processes. In Section \ref{sec:robust hedging} we introduce the robust hedging approach, illustrated with simulated examples, while in Section \ref{sec:data} we apply robust hedging to real data. Section \ref{sec:conclusions} concludes and the appendix contains  some proofs. \\

\section{Generalized affine processes under parameter uncertainty}\label{sec-GeneralizedAffine}

In this section we extend the notion of  affine diffusions under parameter uncertainty to a more general setting. To this end, consider a state space $E$ which is either $\R$ or $\R_{>0}$. We start with the setting without parameter uncertainty.

A {\it generalized affine diffusion} is a continuous semimartingale $X$ which is a unique strong solution of the stochastic differential equation (SDE)
\begin{align}\label{equ generalized affine process}
 dX_t &= (b_0 + b_1 X_t) dt + (a_0 + a_1 X_t)^\gamma dW_t, 
\end{align}
with suitably chosen $b_i,a_i\in\mathbb{R}$, $i=0,1$, $\gamma\in[1/2,1]$ and initial value $X_0=x\in E$. Here, $W$ denotes a standard Brownian motion. If we choose $\gamma=1/2$ we obtain the well-known special case of a (continuous) affine process.

In Proposition \ref{prop:existence} in the appendix we utilize the classical existence result of Engelbert and Schmidt (\cite{engelbert1985one,engelbert1985solutions}) to show that a generalized affine diffusion  exists on  a proper state space. 

Fix a time horizon $T>0$ and consider $\Omega=C([0,T])$ as the canonical space of continuous one-dimensional paths.  Denote by $\ccF$ the Borel-$\sigma$-algebra on $\Omega$. Let $X$ be the canonical process $X_t(\omega)=\omega_t$ for $\omega\in\Omega$ and $t\in[0,T]$ and denote by $\bbF=(\ccF_t)_{t\in[0,T]}$ the filtration generated by $X$. 

Let $\ccP(\Omega)$ be the set of all probability measures on $(\Omega,\ccF)$. A probability measure $P\in\ccP(\Omega)$ is called a \emph{semimartingale law} for the process $X$ if there exists a process $B^P$ with continuous paths of (locally) finite variation $P$-a.s. and a continuous local $P$-martingale $M^P$ with $B_0^P=M_0^P=0$ such that $X=X_0+B^P+M^P.$ 
Intuitively, this describes the setting when $X$ is a continuous semimartingale under $P$ which is given as a sum of the integrated drift process $B^P$ and a local martingale $M^P$. 

A continuous semimartingale $X=X_0+B^P+M^P$ is said to admit \emph{absolutely continuous characteristics }$(B^P,{C})$ with ${C}=\langle M^P\rangle$ if there exist predictable processes $\beta^P$ and $\alpha>0$ such that\footnote{Note that $\alpha$ does not depend on $P$ as the quadratic variation is a path property. We therefore write $C^P=C$ and $\alpha^P=\alpha$.}
\begin{align}\label{eq:sem mart char}
B^P=\int_0^{\cdot} \beta_s^Pds,\quad {C}=\int_0^{\cdot}\alpha_sds.
\end{align}

In the case that a continuous semimartingale possesses absolutely continuous characteristics, we can directly consider the drift (instead of the integrated drift).

\subsection{Parameter uncertainty}
Next, we introduce parameter uncertainty in the spirit of Frank Knight. Recall that a generalized affine diffusion is characterized by the five parameters $b_0,b_1,a_0,a_1$ and $\gamma$. The targeted uncertainty we are interested in can be described as follows: instead of assuming the parameter $\theta$ to be known exactly, we introduce an interval $[\underline{\theta},\bar \theta]$ and consider each value in the interval equally likely.
Taking into account this parameter uncertainty leads to a nonlinear setting, which we introduce now.

Denote the considered parameter intervals by $[\underline b_i,\bar b_i]$ and $[\underline a_i,\bar a_i]$ with $i=0,1$, and by $[\underline \gamma, \bar \gamma]$. 
Denote by $\Theta:=[\underline b_0,\bar b_0]\times [\underline b_1,\bar b_1]\times[\underline a_0,\bar a_0]\times [\underline a_1,\bar a_1] \times [\underline \gamma, \bar \gamma]$ the parameter set. 

To transport this parameter uncertainty to stochastic processes we have to be more careful. In the Markovian setting we consider here, the evolution of the process may depend on the current state $x$ of the process. In this regard, we introduce the associated intervals 
\begin{align}\label{def:b(x),a(x)}
b(x) :=  \{ b_0 + b_1 x: b_0,b_1 \in \Theta\},\quad
a(x) :=  \{ (a_0 + a_1 x^+)^{2\gamma} : a_0,a_1,\gamma \in \Theta\}
\end{align}
for $x\in\mathbb{R}$, where $(\cdot)^+:=\max\{\cdot,0\}$ which describe the possible diffusive behaviour of the process $X$ given it is in state $x$. 

\begin{remark}[On the role of the state space]
In the classical one-dimensional affine setting, the state space already defines if the affine process is of Cox-Ingersoll-Ross type (when the state space is $\R_{> 0}$ or $\R_{\ge 0}$) or of Vasi\v cek-type. This is no longer the case when parameter uncertainty is introduced. Indeed, in \cite{FadinaNeufelSchmidt2019}, the non-linear Vasi\v cek-CIR model was introduced which has as state space $\R$ and intuitively is able to capture both sorts of dynamics. In interest rate markets, where negative rates have been neglected for a long time, such an approach efficiently avoids the model risk by the necessity to choose between the state space $\R$ and $\R_{\ge 0}$. In the non-linear setting considered here, this leads to the use of $x^+$ in the definition of $a(x)$. This  ensures non-negativity of the quadratic variation when $E=\R$ but does not restrict unnecessarily the dynamics of the generalized affine process.
\end{remark}

The description of the generalized affine process under parameter uncertainty is now intuitively given by all those probability laws describing a diffusion where the drift and the volatility always stay in the intervals $b(x)$ and $a(x)$ considered at $x=X_s(\omega)$. This means, that we consider all continuous semimartingales whose characteristics stay in the parameter uncertainty bounds. 

More precisely, we introduce the following notion\footnote{The name (nonlinear) generalized affine process and the setting of a NGA is inspired by the master thesis \cite{Denk}.}: a \emph{nonlinear generalized affine process (NGA)} starting in $x\in E$ at time $t\in[0,T]$ is the family of all absolutely continuous semimartingale laws $\cA(t,x,\Theta)$, such that for each $P \in \cA(t,x,\Theta)$ giving rise to the differential characteristics $(\beta^P,\alpha)$ we have
\begin{align} \label{eq:intervals GNLA}
  \beta_s^P \in b(X_s), \qquad \alpha_s \in a(X_s) 
\end{align}
$dt \otimes dP$-almost surely on $(t,T]\times \Omega$ and $P(X_t=x)=1$. We call $P$ \emph{generalized affine dominated} by $\Theta$ on $(t,T]$  or simply \emph{GA-dominated} by $\Theta$. Note that non-negativity of the quadratic variation is ensured by using $(\cdot)^+$ in the definition of $a$ in Equation \eqref{def:b(x),a(x)}

\subsection{Robust pricing of derivatives}

In order to study derivative prices under parameter uncertainty, we consider an European claim with maturity $T$ and  payoff $\psi(X_T)$. Here, $\psi$ is an integrable function $\psi:E\rightarrow\mathbb{R}$. 

Since in our robust setting we do not have a single probability measure at hand, but a family of measures which we treat equally likely, a natural candidate for pricing is the worst-case price: the price which dominates all prices computed under the probability measures we consider. 

In this regard, define the \emph{value function} $v:[0,T]\times E\rightarrow \mathbb{R}$ by
$$
v(t,x):=\sup_{P\in \cA(t,x,\Theta)} \mathbb{E}^P[\psi(X_T)].
$$
Analogously one can define a lower bound of possible prices 
$\inf_{P\in \cA(t,x,\Theta)} \mathbb{E}^P[\psi(X_T)]$, which, due to the relation $\inf_{P\in \cA(t,x,\Theta)} \mathbb{E}^P[\psi(X_T)]=-\sup_{P\in \cA(t,x,\Theta)} \mathbb{E}^P[-\psi(X_T)]$, can be studied with identical methods. We therefore focus on the upper bound.

A central tool for establishing a nonlinear Kolmogorov equation and therefore the tractability of the setting is the \emph{dynamic programming principle}. Intuitively it states that if we consider a stopping time between $t$ and $T$ and compute the price (the value function) at that stopping time and take expectations of this random quantity, we obtain the value function at time $t$. Thus, the value can not be improved by however skilled stopping. 

\begin{proposition}\label{prop DPP}
Consider a nonlinear generalized affine process with state space $E$ and a stopping time $\tau$ on $[t,T].$ For any $(t,x)\in[0,T]\times \Omega$, we have
\begin{align}\label{eq:DPP}
v(t,x)=\sup_{P\in \cA(t,x,\Theta)} \mathbb{E}^P[v(\tau,X_{\tau})].
\end{align}
\end{proposition}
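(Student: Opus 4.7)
The plan is to prove the two inequalities of \eqref{eq:DPP} separately, in the spirit of the standard dynamic programming arguments for sublinear expectations generated by families of semimartingale laws (cf.\ \cite{FadinaNeufelSchmidt2019} and the references therein).

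For the direction ``$\le$'', I would fix $P\in\cA(t,x,\Theta)$, disintegrate $P$ along $\ccF_\tau$ to obtain a regular conditional kernel $\omega\mapsto P^\omega$, and observe that for $P$-a.e.\ $\omega$ the law $P^\omega$ lies in $\cA(\tau(\omega),X_{\tau(\omega)}(\omega),\Theta)$: under $P^\omega$ the canonical process is still an absolutely continuous semimartingale on $[\tau(\omega),T]$ with the same differential characteristics $(\beta^P,\alpha)$, and the pathwise, state-dependent inclusions in \eqref{eq:intervals GNLA} are preserved under conditioning. The tower property then gives
\[
\mathbb{E}^P[\psi(X_T)] = \mathbb{E}^P\big[\mathbb{E}^{P^\omega}[\psi(X_T)]\big] \le \mathbb{E}^P[v(\tau,X_\tau)],
\]
and taking the supremum over $P\in\cA(t,x,\Theta)$ on the left-hand side yields the first inequality.

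For the direction ``$\ge$'', the strategy is measurable selection combined with pasting. Fix $P\in\cA(t,x,\Theta)$ and $\varepsilon>0$. Using analytic measurability of the value function $v$ together with the graph of the set-valued map $(s,y)\mapsto\cA(s,y,\Theta)$, I would invoke the Jankov--von Neumann selection theorem to construct a universally measurable kernel $\omega\mapsto Q^\omega\in\cA(\tau(\omega),X_{\tau(\omega)}(\omega),\Theta)$ with $\mathbb{E}^{Q^\omega}[\psi(X_T)]\ge v(\tau(\omega),X_{\tau(\omega)}(\omega))-\varepsilon$. Pasting $P$ with the kernel $Q^\omega$ at $\tau$ produces a law $\tilde P$ on $(\Omega,\ccF)$ whose differential characteristics agree with those of $P$ on $[t,\tau]$ and with those of $Q^\omega$ on $(\tau,T]$; since both pieces satisfy \eqref{eq:intervals GNLA} and the intervals $b(X_s),a(X_s)$ depend only on the current state, $\tilde P\in\cA(t,x,\Theta)$. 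Hence
\[
v(t,x)\ge\mathbb{E}^{\tilde P}[\psi(X_T)] = \mathbb{E}^P\big[\mathbb{E}^{Q^\omega}[\psi(X_T)]\big] \ge \mathbb{E}^P[v(\tau,X_\tau)] - \varepsilon,
\]
and letting $\varepsilon\downarrow 0$ and supremizing in $P$ closes the argument.

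The main obstacle is the second inequality, and it splits into two intertwined technical points: first, analytic measurability of $\{(s,y,P):P\in\cA(s,y,\Theta)\}$ inside $[0,T]\times E\times\ccP(\Omega)$, required to justify the measurable selection of $\varepsilon$-optimizers; second, stability of $\cA(t,x,\Theta)$ under concatenation, required to ensure that the pasted law really belongs to the admissible family. Both properties rest on the fact that $b$ and $a$ in \eqref{def:b(x),a(x)} are continuous, compact-valued functions of the current state (since $\Theta$ is a compact box) and on the Markovian, non-path-dependent nature of the constraints. These are precisely the structural ingredients exploited in \cite{FadinaNeufelSchmidt2019} for nonlinear affine processes, and they transfer to the NGA class without essential change.
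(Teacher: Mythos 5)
Your proposal is correct and follows essentially the same route as the paper: the paper defers to the abstract dynamic programming principle of El Karoui and Tan (Theorem 2.1 there), whose proof is exactly your two-inequality argument (disintegration for ``$\le$'', Jankov--von Neumann selection plus pasting for ``$\ge$''), with the measurability of the graph of $(s,y)\mapsto\cA(s,y,\Theta)$ and the stability under conditioning and concatenation supplied by the analogues of Lemmas 1 and 2 of \cite{FadinaNeufelSchmidt2019}. You have simply unpacked the cited machinery, and you correctly identify the two technical conditions on which the argument rests.
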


The proof of this result follows similarly to the proof of the dynamic programming principle in \cite{FadinaNeufelSchmidt2019} which is based on Theorem 2.1 in \cite{ElKarouiTan2013}. The necessary measurability and stability conditions which are proved for the affine case in Lemma 1 and 2 of \cite{FadinaNeufelSchmidt2019} can be shown similarly for the generalized affine case.

\subsection{The nonlinear Kolmogorov equation}

Note that the computation of the value function according to Equation \eqref{eq:DPP}, or the robust upper price of a derivative is not as easily accessible  by Monte-Carlo estimation as in the classical case. Indeed, as is clear form Equation \eqref{eq:DPP}, it is not sufficient to simulate different paths under different distributions but we need to obtain Monte-Carlo estimates of expectations with a fixed probability measure $P$ and then need to find the supremum of these expectations. If no monotonicity can be exploited, this will be difficult to compute.

However, a very efficient tool can be developed which is a nonlinear version of the Kolmogorov equations. By relying on numerical methods for nonlinear partial differential equations, we will be able to compute the value function within seconds. 

A central tool for describing Markov processes is the infinitesimal generator. For the generalized affine process $X$ (under no parameter uncertainty, see \eqref{equ generalized affine process}), the infinitesimal generator is given by 
$$
\ccL^{\theta}f(x)=(b_0+b_1 x)\partial_x f(x)+\frac{1}{2}  (a_0+a_1 x^+)^{2\gamma} \partial_{xx} f(x),
$$
with $f\in C^2(\R)$. 

For the nonlinear version of the Kolmogorov equation we will take the worst-case generator, i.e.~the supremum over all generators with $\theta \in \Theta$. 
More precisely, for some integrable $\psi:E\rightarrow \mathbb{R}$ consider the nonlinear partial differential equation
\begin{equation}\label{equ nonlinear PDE}
\left\{
\begin{array}{rcccl}
\partial_t u +G(x,\partial_x u(t,x),\partial_{xx} u(t,x))&=&0 \quad &\mbox{on}&\, [0,T)\times E\\
u(T,x)&=& \psi(x)\quad &\mbox{for}& \, x\in E,
\end{array}\right.
\end{equation}
where $G:E\times \mathbb{R}\times \mathbb{R}\rightarrow \mathbb{R}$ is defined via
\begin{align}\label{eq:def:PDE}
G(x,p,q):=\sup_{(b_0,b_1,a_0,a_1,\gamma)\in\Theta} \left\{(b_0+b_1 x)p+\frac{1}{2}(a_0+a_1 x^+)^{2\gamma} q\right\}.
	\end{align}

We then obtain the value function as viscosity solution of the nonlinear Kolmogorov equation. 

\begin{theorem}\label{thm:Kolmogorov}
Consider a family of nonlinear generalized affine processes with state space $E$ and a Lipschitz continuous payoff function $\psi:E\rightarrow\mathbb{R}$. Then,
$$
v(t,x):=\sup_{P\in \cA(t,x,\Theta)}\mathbb{E}^P[\psi(X_T)],\quad \mbox{for } x\in E, t\in[0,T],
$$
is a viscosity solution to the PDE (\ref{equ nonlinear PDE}).
\end{theorem}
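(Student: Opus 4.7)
The plan is to follow the standard dynamic-programming route to viscosity solutions, exploiting Proposition~\ref{prop DPP} as the main structural input and adapting the argument in \cite{FadinaNeufelSchmidt2019} to the generalized affine setting where the exponent $\gamma$ also varies over $\Theta$. Before verifying the two viscosity inequalities, I would establish that $v$ is continuous on $[0,T]\times E$, using Lipschitz continuity of $\psi$ together with standard estimates for the moments of $X_T$ obtained uniformly over $P\in\cA(t,x,\Theta)$ via Gronwall-type arguments applied to the SDE coefficients restricted to the compact parameter set $\Theta$. Continuity of $v$ is necessary so that the expectations appearing in \eqref{eq:DPP} can be meaningfully tested against smooth functions and so that limits of the form $h\downarrow 0$ below can be passed under the integral.

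For the \emph{supersolution} inequality, I would fix $(t_0,x_0)\in[0,T)\times E$ and a test function $\phi\in C^{1,2}$ with $\phi\le v$ and $\phi(t_0,x_0)=v(t_0,x_0)$. For any fixed $\theta=(b_0,b_1,a_0,a_1,\gamma)\in\Theta$, the SDE \eqref{equ generalized affine process} with constant coefficients admits a strong solution whose law $P^\theta$ lies in $\cA(t_0,x_0,\Theta)$. Applying Itô's formula to $\phi(s,X_s)$ under $P^\theta$ up to a stopping time $\tau_h:=(t_0+h)\wedge\inf\{s\ge t_0:|X_s-x_0|\ge 1\}$ and combining with $v(t_0,x_0)\ge \mathbb{E}^{P^\theta}[v(\tau_h,X_{\tau_h})]\ge \mathbb{E}^{P^\theta}[\phi(\tau_h,X_{\tau_h})]$ (from the DPP and $\phi\le v$), dividing by $h$ and sending $h\downarrow 0$ via dominated convergence yields $-\partial_t\phi(t_0,x_0)-\ccL^\theta\phi(t_0,x_0)\le 0$. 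Taking the supremum over $\theta\in\Theta$ gives the desired inequality $-\partial_t\phi(t_0,x_0)-G(x_0,\partial_x\phi(t_0,x_0),\partial_{xx}\phi(t_0,x_0))\le 0$.

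For the \emph{subsolution} inequality, I would fix $\phi\in C^{1,2}$ with $\phi\ge v$ and equality at $(t_0,x_0)$, and argue by contradiction: suppose there exists $\eta>0$ and a neighbourhood of $(t_0,x_0)$ on which $-\partial_t\phi-G(x,\partial_x\phi,\partial_{xx}\phi)\ge\eta$. By continuity of the coefficients $x\mapsto b_0+b_1x$ and $x\mapsto(a_0+a_1x^+)^{2\gamma}$ in $(x,\theta)$ on compact subsets of $E\times\Theta$, this inequality transfers uniformly in $\theta$: for every $P\in\cA(t_0,x_0,\Theta)$, any admissible $(\beta^P_s,\alpha_s)\in b(X_s)\times a(X_s)$ satisfies $-\partial_t\phi(s,X_s)-\beta^P_s\partial_x\phi(s,X_s)-\tfrac12\alpha_s\partial_{xx}\phi(s,X_s)\ge\eta/2$ on a small stochastic interval. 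Applying Itô and taking expectations gives $\mathbb{E}^P[\phi(\tau_h,X_{\tau_h})]\le\phi(t_0,x_0)-c\cdot h$ for some $c>0$ uniformly over $P\in\cA(t_0,x_0,\Theta)$. Since $v\le\phi$ and $v(t_0,x_0)=\phi(t_0,x_0)$, this contradicts the DPP identity $v(t_0,x_0)=\sup_{P}\mathbb{E}^P[v(\tau_h,X_{\tau_h})]$.

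The main obstacle I expect is twofold. First, establishing continuity (or at least upper/lower semicontinuity) of $v$ requires uniform-in-$\theta$ stability estimates for the SDE \eqref{equ generalized affine process}; since $\gamma\in[1/2,1]$ the diffusion coefficient is only Hölder (not Lipschitz) in $x$ when $\gamma=1/2$, so one needs Yamada–Watanabe-type comparison rather than Gronwall, and on $E=\R$ the use of $x^+$ inside $a(x)$ must be handled with care. Second, in the subsolution step the measurable selection of near-optimal parameters as functions of $(s,X_s)$ — implicit in ``every admissible $(\beta^P,\alpha)$'' — requires that the compact-valued multifunctions $x\mapsto b(x)$, $a(x)$ admit measurable selectors, which follows from standard selection theorems but must be invoked cleanly so that the resulting coefficients generate a genuine element of $\cA(t_0,x_0,\Theta)$ via weak existence for the driftless/drifted SDE with measurable coefficients.
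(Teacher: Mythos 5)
Your proposal is correct and follows essentially the same dynamic-programming route as the paper: uniform moment estimates give continuity of $v$, and Proposition~\ref{prop DPP} combined with It\^o's formula yields the two viscosity inequalities. The only organisational difference is that the paper proves the subsolution half directly --- bounding $\E^P[\varphi(t+u,X_{t+u})-\varphi(t,x)]$ from above by $u\,\partial_t\varphi+u\,G+O(u^{3/2})$ uniformly in $P$ via the moment estimate of Lemma~\ref{lemma 1} and then dividing by $u$ --- rather than by contradiction; the two variants are interchangeable. Two small corrections. First, the displayed conclusion of your supersolution step carries the wrong sign: the chain $\phi(t_0,x_0)=v(t_0,x_0)\ge\E^{P^\theta}[\phi(\tau_h,X_{\tau_h})]$ gives $\partial_t\phi+\ccL^\theta\phi\le 0$ for every $\theta$, hence $-\partial_t\phi-G\ge 0$ after taking the supremum; as written you have stated the subsolution inequality twice. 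Second, the measurable-selection worry is misplaced for your contradiction argument, which only uses that every admissible pair $(\beta^P_s,\alpha_s)$ satisfies $\beta^P_s\,p+\frac{1}{2}\alpha_s\,q\le G(X_s,p,q)$, immediate from \eqref{eq:intervals GNLA} and the definition of $G$; what does require care is the supersolution step, where the constant-$\theta$ laws must actually belong to $\cA(t_0,x_0,\Theta)$ (Proposition~\ref{prop:existence}), and the uniform moment bound, which the paper obtains by a Burkholder--Davis--Gundy self-absorption argument valid for general GA-dominated semimartingale laws rather than by Gronwall applied to SDE coefficients, since elements of $\cA(t,x,\Theta)$ need not be solutions of \eqref{equ generalized affine process}.
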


The result follows by the similar arguments as in the proof of Theorem 1 in \cite{FadinaNeufelSchmidt2019}. We  relegate the proof to the appendix.

\bigskip

\section{Robust hedging}\label{sec:robust hedging}

After the setting for NGA processes has been detailed and the pricing discussed, we come to the main novelty of the paper: the efficient computation of hedging strategies. It is our goal to also find a numerical procedure which replaces the classical Monte-Carlo estimation in a robust setting. Motivated by Theorem \ref{thm:Kolmogorov}, we will proceed as follows: first, we discretize in time and utilize the Euler Maruyama approximation of the generalized affine process as given in Equation \eqref{equ generalized affine process}. Second, at each time step, we select a new parameter set $\theta \in \Theta$ by sampling from a uniform distribution. Note that sampling from a uniform distribution corresponds to assigning equal weight to all probability measures under consideration, which seems adequate for a robust hedging approach that can also be applied in situations which are underestimated by methods solely relying on historical data, see also Remark~\ref{rem_bayesian}, where we discuss possible extensions.

 These processes serve as an approximation of the class $\cA(t,x,\Theta)$. In our robust deep hedging approach we train our network on these samples and determine the hedging function which minimizes the hedging error over all these samples. 

Since this section is mainly on numerics, we take the freedom to generalize the setup of Section \ref{sec-GeneralizedAffine} slightly by also allowing for path-dependent derivatives. The theoretical subtleties which build the basis for this step can be found in \cite{GeuchenSchmidt2021}.

A path-dependent derivative allows that the payoff at maturity $T$ depends on the full path of the process $X$ up to time $t$, $(X_t)_{0 \le t \le T}$. We denote the square-integrable payoff by $\Phi_T:=\Phi((X_t)_{0\leq t \leq T})$, with a measurable  function $\Phi:C([0,T])\rightarrow \R_{>0}$. Our aim is to determine hedging strategies $(h_t)_{0\leq t \leq T}$ and cash positions $d\in \R$ such that the quadratic error is minimized
\begin{equation}\label{eq_minimize_quadratic}
\min_{(h_t)_{0\leq t \leq T},d\in \R}\E^P \bigg[\Big(d+\int_0^T h_t ~d X_t - \Phi_T\Big)^2\bigg]
\end{equation}
for all $P\in \cA(0,x_0,\Theta)$. This formulation is a consequence of the considered model ambiguity, under which every measure from $\cA(0,x_0,\Theta)$ is taken into account.
In the following we develop a deep learning approach to compute the hedging strategy $h$.

\subsection{A numerical procedure relying on deep neural networks}

First, we discretize the interval $[0,T]$ through $0 =t_0 \leq t_1 \leq \cdots \leq t_n=T$.
Next, we approximate the hedging strategy $h_{t_i}$ at grid point $t_i$ through neural networks. We start with a precise definition of neural networks, referring to \cite{petersen2020neural} for a detailed mathematical study on this topic.

Let $\varphi:\R\rightarrow \R$ be a non-constant function, called \emph{activation function}.
A (feed-forward) neural network with input dimension $d_{\operatorname{in}}\in \N$, output dimension $d_{\operatorname{out}}\in \N$, $l\in \N$ layers, and activation function $\varphi$ is a function of the form 
\begin{align*}
\R^{d_{\operatorname{in}}}&\rightarrow\R^{d_{\operatorname{out}}}\\
x &\mapsto A_l \circ \varphi \circ A_{l-1} \circ  \dots \circ \varphi\circ A_0(x),
\end{align*}
where $(A_i)_{i=0,\dots,l}$ are affine functions $A_i:\R^{h_i} \rightarrow \R^{h_{i+1}}$, and where the activation function is applied component-wise. The number $h_i \in \N$ is called the \emph{number of neurons} of layer~$i$.
We say a neural network is \emph{deep} if $l \geq 2$, and we denote the class of all neural networks with input dimension $d_{\operatorname{in}}$, output dimension $d_{\operatorname{out}}$, $l$ layers and activation function $\varphi$ by $\mathcal{NN}^{l,\varphi}_{d_{\operatorname{in}},d_{\operatorname{out}}}$.

To solve the minimization problem stated in \eqref{eq_minimize_quadratic} for arbitrary measures $P\in \cA(0,x_0,\Theta)$ we sample paths of  $(X_{t_i})_{0 \leq i \leq n}$, where we sample each path under a newly randomly picked measure $P\in \cA(0,x_0,\Theta)$, where the parameters are uniformly chosen from $\Theta$ in each time step. We then compute the quadratic hedging error on a batch of samples and optimize the neural network to minimize the quadratic hedging error. This procedure is summarized in Algorithm~\ref{algo_nn} and builds on the findings from \cite{buehler2019deep}, where however no ambiguity w.r.t.\,the choice of the correct underlying probability measure is taken into account.

\begin{algorithm}[t!]
\SetAlgoLined
\SetKwInOut{Input}{Input}
\SetKwInOut{Output}{Output}

\ \\

\Input{parameter set $\Theta$; hyperparmeters of the neural network such as number of layers $l \in \N$, number of neurons, activation function $\varphi$, learning rate of the optimizer; number of iterations $N_{\operatorname{iter}}$; batch size $B$; payoff function $\Phi((X_t)_{0\leq t \leq T})$; discretization $0 =t_0 \leq t_1 \leq \cdots \leq t_n=T$; initial value $x_0$;}
\Output{parameter $d$: cash position of the hedging strategy;\\ neural network $h\in \mathcal{NN}^{l,\varphi}_{2,1}$: self-financing strategy, inputs $t$ and $X_{t}$;}
\ \\[-2mm] 

Initalize the parameters of the neural network $h\in \mathcal{NN}^{l,\varphi}_{2,1}$ randomly;\\[0.5mm]
Initalize parameter $d =0$;\\[0.5mm]
\For{$\operatorname{iter} =1,\dots,N_{\operatorname{iter}}$}{
\For{$b=1,\dots,B$}{

\ \\

Generate paths of the generalized affine process using the Euler-Maruyama method:\\[0.5mm]
$X_0^b:= x_0$, $\Delta t_i: = t_{i+1}-t_{i}$\\[0.5mm]

\For{$i=0,\dots,n-1$}{ \ \\
Generate $\Delta W_i \sim N(0,\Delta t_i)$;\\[1mm]
Generate $\gamma^{(i)}\sim U\left([\underline{\gamma},\overline{\gamma}]\right)$, $a_0^{(i)} \sim U\left([\underline{a_0},\overline{a_0}]\right)$, $a_1^{(i)} \sim U\left([\underline{a_1},\overline{a_1}]\right)$, \\
    \ \ \ \ \ \ \ \ \ \ \ \ \ $b_0^{(i)} \sim U\left([\underline{b_0},\overline{b_0}]\right)$, $b_1^{(i)} \sim U\left([\underline{b_1},\overline{b_1}]\right)$;\\[1mm]
set $X^b_{i+1} := X_i^b+(b_0^{(i)}+b_1^{(i)} X_i^b) \Delta t_i +(a_0^{(i)}+a_1^{(i)} X_i^+)^{\gamma^{(i)}}  \Delta W_i$
}
}
Apply stochastic gradient descent / backpropagation  to minimize the loss
\[
\sum_{b=1}^B\left( d + \sum_{i=0}^{n-1} h(t_i,X_i^b)(X_{i+1}^b-X_i^b)-\Phi\left((X_i^b)_{i=1,\dots,n}\right)\right)^2
\] w.r.t.\,the parameters of $h$ and w.r.t.\ $d$
}

 \caption{Computation of Optimal Hedging Strategies}\label{algo_nn}
\end{algorithm}

\begin{remark}[Training Time]
The main source of difference w.r.t.\,computational time of Algorithm~\ref{algo_nn} in comparison with the deep hedging approach from \cite{buehler2019deep} turns out to be the random sampling of the $5$ parameters while creating the paths on which we train our neural networks. This sampling step, which reflects the parameter uncertainty in our approach, is not necessary when applying the approach from \cite{buehler2019deep}. 
We found however that the speed difference in practice is not very pronounced.
In  the setting of Section~\ref{sec_call_option} and with the neural network architecture as specified in the beginning of Section~\ref{sec_numerical} we tested that the approach of \cite{buehler2019deep} runs approximately $1.24$ times faster on a standard computer. ($396$ seconds vs. $318$ seconds for $1{,}000$ iterations of training).
\end{remark}

\begin{remark}[Uniform distribution of parameters]\label{rem:prior distributions}
For training the hedging strategy in Algorithm \ref{algo_nn} we simulate samples from a nonlinear generalized affine process in the spirit of an Euler-Maruyama scheme: in each time step $i$ we simulate the Euler-Maruyama discretization and  choose parameters according to Equation \eqref{eq:intervals GNLA}, i.e.~we draw the parameters $\gamma, a_0,a_1,b_0,b_1$ uniformly from $\Theta$.
This seems to be the natural choice for a robust setting since one is interested in putting equal weights on all possible scenarios.
An alternative, but typically more costly strategy, would be to choose a certain discretization of $\Theta$ and to consider \emph{all} gridpoints in each step. Another alternative would be a Bayesian a posteriori distribution, as noted in Remark \ref{rem_bayesian}. 
\end{remark}

\begin{remark}[The quadratic loss function]
In Algorithm \ref{algo_nn} we chose a quadratic loss function which balances gains and losses from the seller and buyer symetrically and therefore leads to a fair hedging price which seems reasonable in many practical applications. If, however, one is rather interested in a classical robust hedging which dominates all hedging strategies for each $P \in \cA(0,x_0,\Theta)$, one would choose a loss function which penalizes losses but not gains as for example a risk measure. This procedure was also suggested in \cite{buehler2019deep}, but is not studied further here. 
\end{remark}

\subsection{Numerical Experiments}\label{sec_numerical}
We apply the presented numerical routine from Algorithm~\ref{algo_nn} in several examples. For all of the examples in this section we consider a nonlinear generalized affine process with parameters specified through
 \begin{equation}\label{eq_parameters_nlap}
\begin{aligned}
x_0 & = 10\\
a_0 &\in [0.3,0.7],~~~ a_1 \in [0.4,0.6], \\
b_0 &\in [-0.2,0.2],~~~ b_1 \in [-0.1,0.1], \\
\gamma & \in [0.5,1.5].
\end{aligned}
\end{equation}

To train neural networks  $h\in \mathcal{NN}^{l,\varphi}_{2,1}$ according to Algorithm~\ref{algo_nn} we specify their architecture as follows. We apply to all layers of the neural network the \emph{ReLU}-activation function $\varphi(x)=\max\{x,0\}$ while the neural networks possess $l=4$ layers with $256$ neurons each. Moreover, Algorithm~\ref{algo_nn} is implemented using the \emph{Tensorflow}-environment~(\cite{abadi2016tensorflow}), in which we execute Algorithm~\ref{algo_nn} with a batch size of $256$ and by employing the Adam optimizer (\cite{kingma2014adam}) with standard parameters and a learning rate of $0.005$ for the backpropagation step.
The used \emph{Python}-codes are provided for convenience and can be found under \href{https://github.com/juliansester/nga}{https://github.com/juliansester/nga}.

In the following we will evaluate the performance based on the relative hedging error, defined as hedge minus payoff of the derivative divided by the respective price of the hedging strategy. Compared to the (relative) quadratic hedging error this has the advantage that we also observe the direction of the error.

\subsubsection{Hedging of at-the-money call options}\label{sec_call_option}
First, we compute, by applying Algorithm~\ref{algo_nn}, an optimal hedging strategy for fixed parameters $a_0=0.5$, $a_1=0.5$, $b_0=0$, $b_1=0$, $\gamma =1$ (the mean of each of the respective intervals in \eqref{eq_parameters_nlap}) for a call option with payoff $\Phi_T=(X_T-x_0)^+$ for $T=30/365$.

Then, for the same derivative, we consider parameter uncertainty by taking into account uncertainty w.r.t.\ the model parameters as specified in \eqref{eq_parameters_nlap}. By applying Algorithm~\ref{algo_nn} with $n=30$, we determine the optimal hedging strategy under parameter uncertainty.
In the left panel of Figure~\ref{fig_strat_gamma_certain_call}, we depict the optimal hedging strategy with fixed parameters, obtained by Algorithm~\ref{algo_nn} after $10{,}000$ iterations, whereas in the right panel of  Figure~\ref{fig_strat_gamma_certain_call}, we depict the optimal hedging strategy, computed with the same number of iterations, when including parameter uncertainty.

\begin{figure}[t!]
\begin{center}
\includegraphics[width=0.45\textwidth]{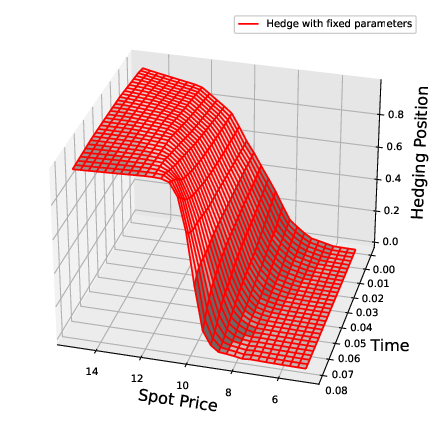}
\includegraphics[width=0.45\textwidth]{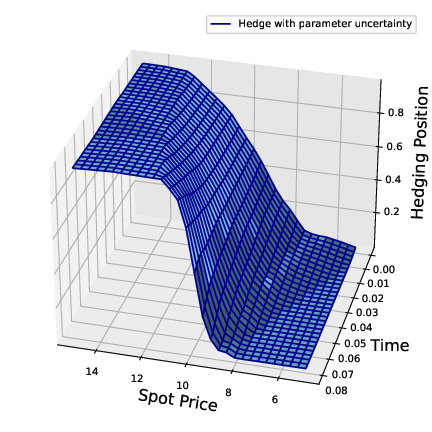}
\end{center}
\caption{ Hedging of at-the-money call options. \underline{Left:} Hedging strategy for fixed parameters $a_0=0.5$, $a_1=0.5$, $b_0=0$, $b_1=0$, $\gamma =1$. \underline{Right:} Robust hedging taking into account parameter uncertainty as specified in \eqref{eq_parameters_nlap}. }\label{fig_strat_gamma_certain_call}
\end{figure}

\begin{figure}[t!]
\begin{center}
\includegraphics[width=0.8\textwidth]{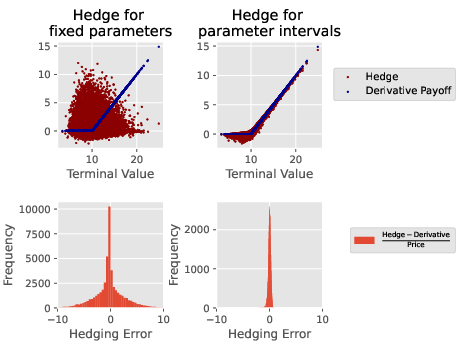}

\caption{Hedging of an at-the-money call option. \underline{Top:} The left panel shows the optimal hedge for fixed parameters $a_0=0.5$, $a_1=0.5$, $b_0=0$, $b_1=0$, $\gamma =1$ and the right panel shows the hedge under parameter uncertainty with parameter intervals as in \eqref{eq_parameters_nlap}, both 
evaluated on $50{,}000$ paths generated according to \eqref{eq_parameters_nlap}. 
\underline{Bottom:}  The figures depict the relative hedging error of the hedge, trained with fixed parameters (left panel) and trained under parameter uncertainty (right panel) with parameter intervals as in \eqref{eq_parameters_nlap}. }
\label{fig_strat_gamma_certain_call_hedging}
\end{center}
\end{figure}

Even though both strategies look very similar at first sight, they perform very differently on random paths generated under parameter uncertainty. For an illustration of this effect, we compute the relative hedging error of both strategies on $50{,}000$ paths that are generated according to the parameters from \eqref{eq_parameters_nlap}, i.e., under parameter uncertainty. The results are displayed in  Figure~\ref{fig_strat_gamma_certain_call_hedging} and reveal that the hedging strategy which was trained on paths that take into account parameter uncertainty possesses a remarkably smaller hedging error in comparison with the strategy which was trained on paths with fixed parameters and which is optimal for these. In Table~\ref{tbl_hedges_robust_fix} we provide mean and standard deviation of the relative hedging errors  verifying the observation that the robust hedging strategy outperforms in this scenario the non-robust hedging strategy.

\subsubsection{Hedging of a butterfly option}\label{sec_butterfly}
While a call option has a high degree of monotonicity, we now explore a more complicated option, a butterfly option. Note that in classical \emph{linear} pricing, one can obtain  the price of a butterfly as the sum of the prices of calls and puts. This is no longer true in the nonlinear case, the case with parameter uncertainty, since the supremum destroys the linearity. Thus, nonlinear pricing in this setting is substantially more involved.

We consider a NGA process with  parameters as in Equation \eqref{eq_parameters_nlap} and a \emph{butterfly} payoff function given by 
\[
\Phi_T=(X_T-8)^++(X_T-12)^+-2\cdot(X_T-10)^+.
\]

We depict the optimal hedging strategy which was computed according to Algorithm~\ref{algo_nn} with $n=30$ in the left panel of Figure~\ref{fig_butterfly_hedge}. The relative hedging error evaluated on $50{,}000$ samples, created under uncertainty, is illustrated in the middle panel while in the right panel we provide a histogram of the difference between the absolute value of the hedging error of a hedging strategy trained on paths with fixed parameters $a_0=0.5$, $a_1=0.5$, $b_0=0$, $b_1=0$, $\gamma =1$ with the absolute value of the relative hedging error of the robust strategy. The histogram shows that in most of the samples the relative hedging error of the non-robust hedge is larger than the relative hedging error of the robust hedge, an observation which is also verified through Table~\ref{tbl_hedges_robust_fix}.

\begin{figure}[t!]
\begin{center}
\includegraphics[width=0.32\textwidth]{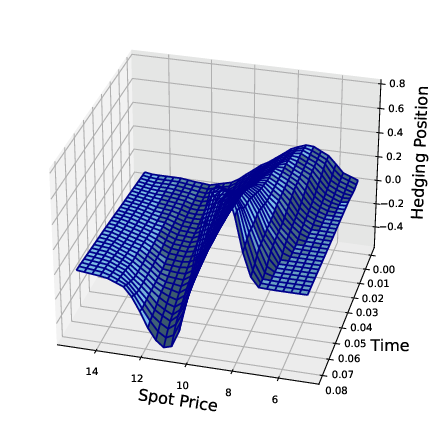}
\includegraphics[width=0.32\textwidth]{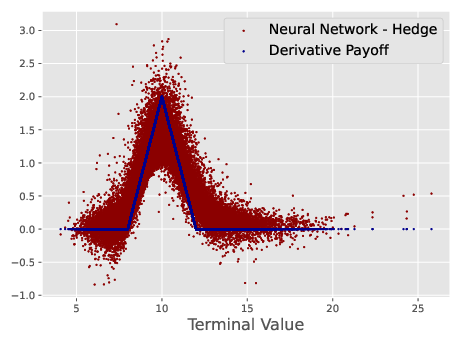}
\includegraphics[width=0.32\textwidth]{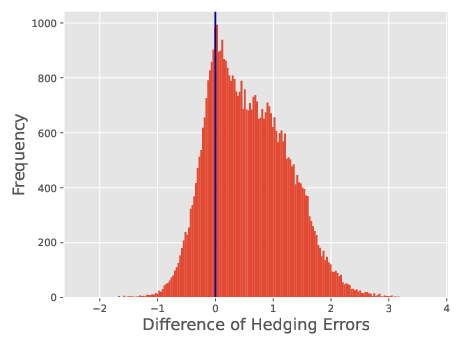}
\end{center}

\caption{\underline{Left:} robust hedging strategy for butterfly option $(X_T-8)^++(X_T-12)^+-2\cdot(X_T-10)^+$. \underline{Centre:} trained robust hedging strategy evaluated on $50{,}000$ samples created under uncertainty. \underline{Right:} histogram of the  difference between the relative hedging error of a hedging strategy with fixed parameters  $a_0=0.5$, $a_1=0.5$, $b_0=0$, $b_1=0$, $\gamma =1$ and the relative hedging error of a robust hedging strategy, evaluated on $50{,}000$ samples created under uncertainty.}\label{fig_butterfly_hedge}

\end{figure}

\subsubsection{Hedging of path-dependent options}\label{sec_path_dependent}
Next, we consider a NGA process with  parameters as in \eqref{eq_parameters_nlap} and a  lookback call option with payoff function
\[
\Phi_T= \left(\max((X)_{0\leq t \leq T})-12\right)^+
\]

We depict the optimal hedging strategy computed with Algorithm~\ref{algo_nn} with $n=30$ in the left panel of Figure~\ref{fig_path_hedge} and the hedging error evaluated on $50{,}000$ samples, created according to uncertainty as in \eqref{eq_parameters_nlap}, in the middle panel of Figure~\ref{fig_path_hedge}. In the right panel of Figure~\ref{fig_path_hedge} we compare the hedging error of a trained non-robust strategy with the hedging error of the trained robust strategy. The trained robust strategy outperforms the non-robust strategy clearly on scenarios that were created under uncertainty according to \eqref{eq_parameters_nlap}, which also can be seen in Table~\ref{tbl_hedges_robust_fix}.

As the payoff function is path-dependent, we could improve the hedging performance further by allowing the self-financing hedging strategy $h_t(X_t,\max_{0\leq s \leq t}X_s)$ to be dependent also on the running maximum. We observe that this approach indeed additionally improves the hedging performance to some degree. The results are displayed in the rightmost column of Table~\ref{tbl_hedges_robust_fix}.
\begin{figure}[t!]
\begin{center}
\includegraphics[width=0.32\textwidth]{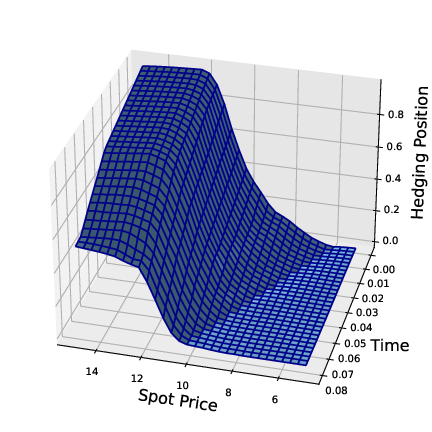}
\includegraphics[width=0.32\textwidth]{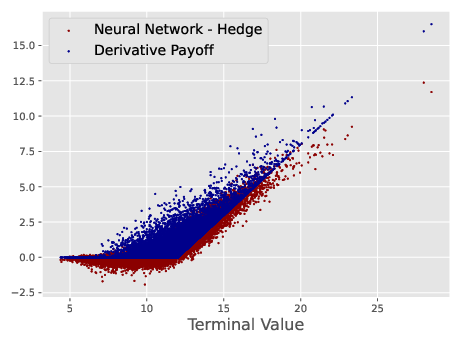}
\includegraphics[width=0.32\textwidth]{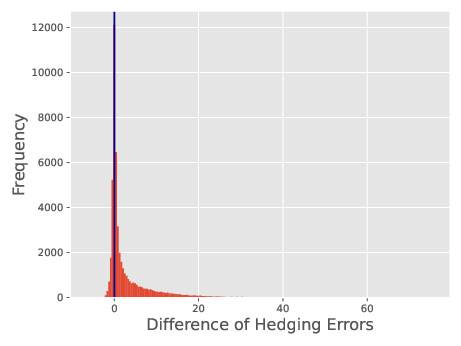}
\end{center}
\caption{\underline{Left:} robust hedging strategy for the lookback call  $\max((X)_{0\leq t \leq T}-12)^+$.
\underline{Centre:} trained robust hedging strategy evaluated on $50{,}000$ samples created under parameter uncertainty  \eqref{eq_parameters_nlap}.
\underline{Right:} histogram of the difference between the absolute value of the hedging error of a strategy trained with fix parameters   $a_0=0.5$, $a_1=0.5$, $b_0=0$, $b_1=0$, $\gamma =1$ and of the absolute value of the hedging error of the robust hedging strategy.}\label{fig_path_hedge}
\end{figure}

\begin{table}[t!]
\begin{center}
\begin{tabular}{llllllll}
\toprule
&\multicolumn{2}{c}{Call  } &\multicolumn{2}{c}{Butterfly   }&\multicolumn{3}{c}{Lookback }\\
\midrule 
         Parameters& fixed & robust & fixed & robust & fixed & robust & run max \\
\midrule

mean &2.0266  & \textbf{0.1924} &0.9078  & \textbf{0.3328} & 3.5582  & 0.5857  & \textbf{0.5476}  \\  
std.\ dev. &2.2650  &0.1636   &0.6548   &0.2648  & 6.0482 & 0.7228   & 0.6305 \\ \bottomrule\end{tabular} 
\end{center}
\caption{\emph{Mean} and \emph{standard deviation} of  the absolute value of the hedging error of different hedging strategies. The strategies were trained according to Algorithm~\ref{algo_nn} for the payoff functions that were discussed in Section~\ref{sec_call_option}, Section~\ref{sec_butterfly} and Section~\ref{sec_path_dependent}, evaluated on $50{,}000$ sample paths created under parameter uncertainty. \emph{Fixed}  hedging strategies were trained for the \emph{fixed} parameters $a_0=0.5$, $a_1=0.5$, $b_0=0$, $b_1=0$, $\gamma =1$ and  \emph{robust} strategies  were trained with parameter uncertainty as in \eqref{eq_parameters_nlap}. 
For the lookback-option we also consider a hedging strategy depending on the running maximum (\emph{run max}), which outperforms the Markovian strategy in the path-dependent case. 
\label{tbl_hedges_robust_fix}}
\end{table}

\subsection{Prices of hedging strategies}\label{sec:hedging_prices}

The presented robust hedging approach allows to respect the problem that in many situations robust price bounds such as $\sup_{P\in \cA(0,x,\Theta)}\mathbb{E}^P[\Phi_T(X_T)]$ are too expensive to have practical relevance (compare e.g. \cite{frey1999bounds}, \cite{biagini2004super} and \cite{neufeld2018buy}).

Robust price bounds may of course still be of interest, for instance to check the market for mispriced derivatives or to compute price bounds when the parameter set $\Theta$ is chosen sufficiently small such that this approach leads to meaningful prices. To compute the price bound $\sup_{P\in \cA(0,x,\Theta)}\mathbb{E}^P[\Phi_T((X_T)]$ one may then solve the corresponding PDE \eqref{equ nonlinear PDE} by using an explicit finite-difference method, compare also \cite{FadinaNeufelSchmidt2019}, where a similar approach in a nonlinear affine setting is pursued and see the companion code on \href{https://github.com/juliansester/nga}{https://github.com/juliansester/nga} for more details.

In Figure~\ref{fig_comparison_prices} we provide the prices of hedging strategies for a call option with strike $K=10$ (as in Section~\ref{sec_call_option}) and of a butterfly option as in Section~\ref{sec_butterfly}, where we consider the parameters as specified in \eqref{eq_parameters_nlap}. For comparison, we also show price bounds $\sup_{P\in \cA(0,x,\Theta)}\mathbb{E}^P[\Phi_T((X_T)]$ and $\inf_{P\in \cA(0,x,\Theta)}\mathbb{E}^P[\Phi_T((X_T)]$ computed with the mentioned explicit finite-difference method. We display prices and price bounds for different initial values of the underlying process. 

\begin{figure}[t!]
\begin{center}
\includegraphics[width=0.45\textwidth]{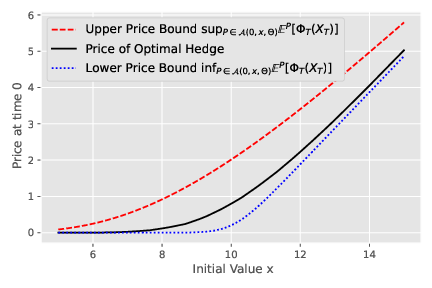}
\includegraphics[width=0.45\textwidth]{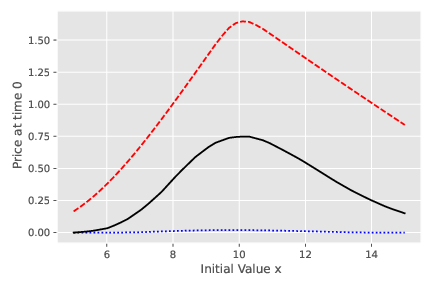}
\end{center}
\caption{For a call option with payoff function $\Phi_T(X_T)=(X_T-10)^+$ (left panel) and a butterfly option with payoff function $\Phi_T(X_T)=(X_T-8)^++(X_T-12)^+-2\cdot(X_T-10)^+$ (right panel), we compare the price bounds $\inf_{P\in \cA(0,x,\Theta)}\mathbb{E}^P[\Phi_T((X_T)]$ and $\sup_{P\in \cA(0,x,\Theta)}\mathbb{E}^P[\Phi_T((X_T)]$, that were computed by using a finite differences algorithm, with the price of the optimal hedge computed according to Algorithm~\ref{algo_nn}. We assume the parameters from \eqref{eq_parameters_nlap} and show prices as functions of the initial value $x$ of the stock price $X$.}\label{fig_comparison_prices}
\end{figure}

The results indicate that the price of the hedging strategies, that were computed according to Algorithm~\ref{algo_nn} and which lie well between lower and upper price bound, possess great practical relevance for two reasons. First, the associated prices are neither too low nor too high to be tradable. Second and in contrast to the prices computed as maximal expectations, the prices come with a trading strategy that allows to hedge the associated financial derivative under model uncertainty.

\section{Application to real-world data}\label{sec:data}

In this section, we evaluate the performance of the robust deep hedging strategy on financial data. To this end, we extracted daily closing prices of $20$ of the largest constituents\footnote{The considered constituents are: Apple Inc,  Microsoft Corporation,  Amazon.com Inc., Alphabet Inc. Class C, Berkshire Hathaway Inc. Class B, JPMorgan Chase $\&$ Co., Johnson $\&$ Johnson, Visa Inc. Class A, UnitedHealth Group Incorporated, NVIDIA Corporation, Procter $\&$ Gamble Company, Home Depot Inc., Mastercard Incorporated Class A, Bank of America Corp, Walt Disney Company, Comcast Corporation Class A, Exxon Mobil Corporation, Adobe Inc., Verizon Communications Inc., Intel Corporation.} of the US stock market index $S\&P~500$ from $26$ September $2008$ until $09$ April 2020 from \emph{Thomson Reuters Eikon}. This time period shows a high level of uncertainty during the beginning of the COVID-19 pandemic and thus poses a challenging environment for  hedging strategies. 

\subsection{Uncertainty in the parameter estimates} 
To analyze and illustrate the uncertainty present in parameter estimates, we consider parameter estimations on rolling windows. The obtained results allow us to specify the uncertainty set $\Theta$.
These results also underline the high degree of uncertainty present in the considered data. 

More precisely, we estimated the parameters under the assumption that the price observations follow generalized affine processes based on data from $26$ September 2008 until $03$ March $2020$
as follows: consider the discretization of the generalized affine processes $(X_t)_{t\geq 0}$ according to the Euler Maruyama scheme,  
$$
X_{i+1}=X_i+(b_0+b_1X_i)\Delta t_i +(a_0+a_1 X_i^+)^\gamma\Delta W_i
$$
for all values of the process $(X_i)_{1\leq i \leq N}$ on $N$ observation dates (daily observations), 
with time difference $\Delta t_i =1/250$, and normally distributed $\Delta W_i \sim N(0,\Delta t_i)$. 
Then, conditionally on $X_i$, $X_{i+1}$ is normally distributed since
$$
X_{i+1}\sim_{|X_i} \mathcal{N}\left(X_i+(b_0+b_1X_i)\Delta t_i, (a_0+a_1 X_i^+)^{2\gamma} \Delta t_i\right).
$$
Accordingly, given $N\in \N$ daily prices $x:=(x_1,\dots,x_N)$, the
log-likelihood function is given by
\begin{align*}
\ell_x(a_0,a_1,b_0,b_1,\gamma) &= \sum_{i=1}^{N-1}\log\left(\frac{1}{(a_0+a_1 x_i^+)^{\gamma}  \sqrt{2\pi
\cdot \Delta t_i}}\right) \\ &\hspace{1cm}-\frac{1}{2\Delta t_i} \left(\frac{x_{i+1}-x_i-(b_0+b_1x_i)\Delta t_i}{(a_0+a_1 x_i^+)^{\gamma}}\right)^2.
\end{align*}

We consider $2880$ trading days for each of the constituents of the $S\&P~500$-index. After every $100$ days we numerically maximize, by means of the \emph{Constrained Optimization by Linear Approximation}~(COBYLA) optimizer \cite[pp. 83-108.]{conn1997convergence}, the log-likelihood function $\ell_x$ w.r.t.\,the parameters $a_0$, $a_1$, $b_0$, $b_1$, $\gamma$, where $x=({x_1},\dots,{x_{250}})$ consists of the last $250$ trading days. The results of these estimations are illustrated for a single stock in the left panel of Figure~\ref{fig_parameter_estimation}. Moreover, in the middle panel of Figure~\ref{fig_parameter_estimation} we display all estimates from all of the considered $20$ constituents.

The obtained estimates show a considerable variation over time. For example, the estimator of $\gamma$ for Apple Inc.\ ranges from values slightly larger than $0.5$ to values around $1$ and highlights the advantage of using a \emph{generalized} affine process rather than a simple affine process where $\gamma$ would be fixed to $0.5$. The variations of all parameter estimates over the considered 20 constituents of the S\&P 500 confirm this finding. Also all other parameter estimates clearly exhibit a high degree of uncertainty.

\begin{figure}[t!]
\begin{center}
\includegraphics[width=0.32\textwidth]{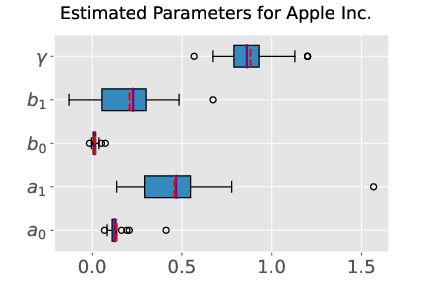}
\includegraphics[width=0.32\textwidth]{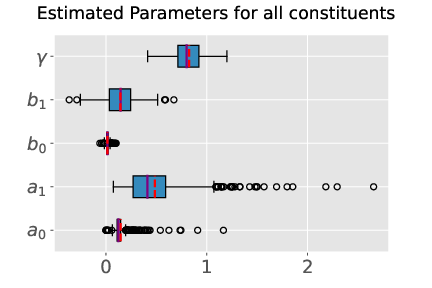}
\includegraphics[width=0.32\textwidth]{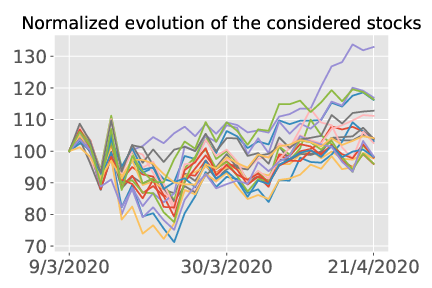}
\end{center}
\caption{\underline{Left:} parameters estimated by maximum-likelihood when assuming the stock of Apple Inc.\,follows a generalized affine process. The estimations are performed every $100$ days for a lookback window of $250$ days. \underline{Centre:} the maximum-likelihood-estimated parameters of all considered $20$ constituents of the $S\&P$ $500$. 
\underline{Right:} normalized (to initial value $100$) evolution of the considered $20$ constituents of the $S\&P$ $500$-index in the considered time period from $09$ March $2020$ until $21$ April $2020$.}
\label{fig_parameter_estimation}
\end{figure}

Given this time series of historical parameter estimates, we estimate the uncertainty set $\Theta$ by the obtained minima and maxima of the maximum-likelihood estimations. The obtained estimator is denoted by $\hat \Theta$. This represents a conservative approach and takes all past observations into account. More precisely, this constitutes the smallest possible choice given the past observations. Of course, the uncertainty set could also be increased to improve robustness, which however comes at the cost of higher (and therefore potentially less attractive) derivatives' prices and higher hedging costs.

\begin{remark}[Historical measure vs risk-neutral measure]
In this section we are mainly interested in the hedging performance which is typically evaluated under the %
\emph{historical} measure. However, there are also cases where one prefers the distribution under the \emph{risk-neutral measure}, see for example \cite{FoellmerSchied2004} for a detailed exposition on various hedging concepts. In the later case one would obtain parameter estimates from liquid derivatives' prices through calibration and then proceed analogously. 
\end{remark}

\begin{remark}[Choice of $\hat \Theta$]
We defined the parameter set $\hat \Theta$ by the intervals induced by the extreme maximum-likelihood-estimates. This corresponds to a conservative approach in which even outliers are deemed to be relevant for the future evolution of the underlying stochastic process. In less conservative approaches one could instead take into account inter-quartile ranges of the estimated parameters or only a grid of parameters associated to the historical estimates. The latter approach avoids that parameter combinations which did not appear in the past (e.g. large values of $\gamma$ and $a_1$ usually do not occur at the same time) are considered as relevant for the future evolution.
\end{remark}

Relying on the estimated uncertainty set $\hat \Theta$ we compute, according to Algorithm~\ref{algo_nn} a hedging strategy for an Asian at-the money put option with daily observations:
\begin{equation}\label{eq_payoff_asian}
\Phi_T=\left(x_0-\frac{1}{30} \sum_{t=1}^{30} X_t\right)^+,
\end{equation}
where $x_0$ corresponds to the respective initial spot value at $09$ March $2020$  and $T=30$ trading days (i.e.\,maturity $21$ April $2020$). Further, we compute for each constituent a hedging strategy which only takes the last maximum-likelihood estimation of the last $250$ days into account.
We then evaluate, on the real price evolution of the constituents of the $S\&P~500$ from $09$ March $2020$ until $21$ April $2020$ (compare the right panel of Figure~\ref{fig_parameter_estimation}) how both hedging strategies perform.

\begin{table}[t!]
\begin{center}\resizebox{\linewidth}{!}{
\begin{tabular}{ll*{6}{c}{l}} \toprule
         Parameters     & fixed & robust & $a_0$ fixed & $a_1$ fixed & $b_0$ fixed & $b_1$ fixed & $\gamma$ fixed &Black--Scholes\\
\midrule
mean &5.8939  &\textbf{0.9014}  &0.9369   &5.1408   &0.9030   &0.9433  &2.1390 &5.7859\\
std. dev. &2.7666   &0.7705   &0.7410  &3.0596  &0.7545   &0.8073 &1.8934 &2.8477\\
min. &0.4391 &0.0279  &0.0104  &0.4116  &0.0192   &0.0185  &0.2303 &2.0983\\
max.& 12.5595  &2.7106   &2.6356  &12.3456  &2.4122  &2.7329   &7.4269 &12.0025\\ \bottomrule
\end{tabular}}
\end{center}
\caption{Relative hedging errors  for an Asian at-the money put option $\left(x_0-\frac{1}{30} \sum_{t=1}^{30} X_t \right)^+$ of trained hedging strategies of the considered $20$ constituents. Each column represents another trained strategy which considers either fixed parameters (to the most recent maximum-likelihood estimation), robust parameter intervals (determined by the most extreme maximum-likelihood estimations), or robust intervals except for a single parameter which is still fixed. The rightmost column shows the hedging error when assuming an underlying Black--Scholes model.}\label{tbl_parameters_sp500}
\end{table}

For this we compare the relative hedging error of the strategies. 
The results are depicted in Table~\ref{tbl_parameters_sp500} and indicate that the robust hedging strategy may, in particular, perform better in periods with high volatility as in the period under consideration. In Table~\ref{tbl_parameters_sp500} we further display the hedging error of strategies that were trained, when all of the considered parameters are assumed to be contained in intervals except for a single parameter which is fixed. 
This analysis allows to compare and analyse the effect of robustness of single parameters on the hedging error. We observe that taking uncertainty into account is in particular important for the volatility parameters and the parameter $\gamma$. More precisely,  while fixing the drift parameters does not lead to considerably worse hedging errors, fixing the exponent $\gamma$ of the volatility term, and in particular the volatility parameter $a_1$, significantly increases the mean and the standard deviation of the hedging error.

For comparison we also report the hedging error when applying the deep hedging approach in a (non-robust) Black--Scholes model\footnote{Note that the Black--Scholes model can be considered as a special case of an NGA-process through setting $\Theta=\{0\}\times\{\mu\}\times \{0\} \times \{\sigma\}\times \{1\}$ for some mean $\mu\in \R$ and some variance $\sigma^2>0$.}, where the parameters are estimated in a consistent manner through maximum likelihood estimation %
while taking into account the time series of the last $250$ trading days. The results of this hedging approach are depicted in the rightmost column of Table~\ref{tbl_parameters_sp500} and show that hedging under the Black--Scholes model leads in the considered period to a mean hedging error and standard deviation comparable to the mean hedging error of an NGA-process with fixed parameters.

This supports our choice of considering the class of generalized affine models in the robust pricing and hedging approach especially during such periods of market turmoil.

\subsection{Considering a non-crisis setting}
While we have now provided evidence for the outperformance of the robust hedging approach over other approaches in a crisis period, the question arises whether the approach is flexible enough to perform comparable to other approaches in periods that would be rather classified as \emph{non-crisis} periods. 

To this end, and to be consistent with the previously introduced methodology, we consider three additional $30$ day testing periods, starting $100$ trading days, $200$ trading days and $300$ trading days, respectively, after $09$ March $2020$. For each additional period we take new maximum-likelihood-estimations of the last $250$ days into account and evaluate, for the same payoff function \eqref{eq_payoff_asian}, the performances of a robust hedging approach, of a non-robust hedging approach and of a hedging approach under a Black--Scholes model. The results of this study are displayed in Table~\ref{tbl_parameters_sp500_after_crisis} and show particularly that in these periods the mean hedging errors and the standard deviation of all approaches are reduced in comparison with the crisis period.

\begin{table}[t!]
\begin{center}{
\begin{tabular}{ll*{1}{c}{c}} \toprule
         Parameters     & fixed & robust &Black--Scholes\\
\midrule
mean  &\textbf{0.4424}	&0.5565 &0.4627\\
std. dev.  &0.2272	&0.4380 &0.2291\\
min.  &0.0265 	&0.0781 &0.0170\\
max. &0.9608 	&2.9152 &0.8612\\ \bottomrule
\end{tabular}}
\end{center}
\caption{\emph{(Non-crisis period):} Relative hedging errors for an Asian at-the money put option $\left(x_0-\frac{1}{30} \sum_{t=1}^{30} X_t \right)^+$ of trained hedging strategies of the considered $20$ constituents. We take three testing periods into account, starting $100$ trading days, $200$ trading days and $300$ trading days, respectively, after $09$ March $2020$ which was the initial day for the period considered in Table~\ref{tbl_parameters_sp500}.}\label{tbl_parameters_sp500_after_crisis}
\end{table}

The best performing model in these periods turns out to be the NGA-process with fixed parameters, while pursuing a robust hedging approach leads to a slightly higher hedging error and a higher standard deviation. These observations indicate that non-robust approaches perform best in regular out-of-crisis periods, whereas a robust hedging approach performs slightly worse in these periods, presumably since such hedging strategies are adjusted and calibrated to a broader range of possible future market movements. Our investigation of the performance of the hedges in the crisis period (Table~\ref{tbl_parameters_sp500}) however reveals that this broad calibration can provide additional strong protection against unexpected market movements as they can be observed in crises.

\subsubsection{Relation to Bayesian approaches}\label{rem_bayesian}
Instead of the presented frequentist approach, in which we use the minimal and maximal maximum-likelihood-estimations to determine the intervals representing $\Theta$ and then to assume that the parameters of the SDE are uniformly distributed on $\Theta$, one might also consider other distributions: first, the empricial distribution of the parameter estimates as shown in Figure \ref{fig_parameter_estimation} is a natural choice. Second,  a Bayesian approach for the determination of the parameter intervals can be implemented, see \cite{duembgen2014estimate} for a Bayesian approach. In this  approach one starts from a prior distribution (e.g. uniform on some pre-defined intervals) for all of the parameters and then sequentially updates the resulting posterior distributions contingent on the same data which we use for the maximum-likelihood-estimations. Eventually, to determine optimal hedging strategies one modifies Algorithm~\ref{algo_nn} by drawing parameters according to the obtained posterior distributions, as already detailed in Remark \ref{rem:prior distributions}. Alternatively, quasi Bayesian approaches as in \cite{BrignoneGonzatoLuetkebohmert} can be used where an asymptotic distribution of parameters is estimated from the quasi posterior distribution. Since the Bayesian approach may put relatively few weight to extreme parameters we decided to implement the presented approach which puts equal weight to all of the parameters that are considered possible. This approach is therefore \emph{robust} w.r.t.\,extreme market movements, for what we provide evidence in the example in Section~\ref{sec:data}.

\section{Conclusion}\label{sec:conclusions}

In this work we studied parameter uncertainty in the class of generalized affine processes and developed a robust hedging approach relying on deep neural networks. This approach shows resilience against unexpected changes in the dynamics of the underlying, which justifies the claimed robustness of this method. Our research is a first step towards the practical application of robust hedging approaches and still many questions remain open: the most pressing one is the practical determination of the uncertainty interval - how much risk is one willing to take by considering a smaller interval (which clearly in good weather conditions will be cheaper in pricing and hedging)? 
The second, highly interesting question is to incorporate transaction costs into the robust deep hedging approach, to treat other dynamics of the underlying and to consider  loss functions different to the quadratic one we used in this paper.

\appendix
\section*{Acknowledgements}
We thank the editor and two anonymous referees for several comments which significantly improved our paper. Moreover, we are thankful to David Criens for helpful remarks. 
Financial support of the NAP Grant \emph{Machine Learning based Algorithms in Finance and Insurance} and from Deutsche Forschungsgemeinschaft (DFG) of the grant \emph{SCHM 2160/13-1} is gratefully acknowledged. 

\section{Appendix}

\subsection{Existence of generalized affine diffusions}

It is well known that the state space $E$ needs to be chosen in correspondence with $\Theta$. In the case where $E=\R$, this does not pose difficulties, but in the case where $E=\R_{>0}$ some care has to be taken. The special case where $\gamma = \nicefrac 1 2$ is the content of Proposition 1 in \cite{FadinaNeufelSchmidt2019}.

We call the state space $E$ \emph{proper} for the non-linear generalized affine process $\cA(t,x,\Theta)$ if $P(X_s \in E, t \le s \le T) = 1$ for all $P \in \cA(t,x,\Theta)$ and all $0 \le t \le T,\ x \in E$. 
The next lemma extends Proposition 1 in \cite{FadinaNeufelSchmidt2019} to the case where $\gamma \neq  \nicefrac 1 2$.

 \begin{lemma}\label{lem:proper}
 Assume that $E=\R_{>0}$, $\underline b_0>0$, $\underline a_0= \bar a_0 = 0$, $\underline a_1>0$ and $\nicefrac 1 2 < \underline \gamma \le \bar \gamma \le 1$. Consider the NGA  $\cA(0,x_0,\Theta)$ with $x_0 \in E$. Then it holds for any $P\in \cA(0,x_0,\Theta)$ that 
 $$ P(X_s >0, \ t \le s \le T ) = 1. $$
 \end{lemma}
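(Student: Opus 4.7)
The plan is to show that the hitting times $\sigma_n := \inf\{s \ge 0 : X_s \le 1/n\}$ satisfy $P(\sigma_n \le T) \to 0$ as $n \to \infty$. Since $X$ is continuous and $X_0 = x_0 > 0$, $\sigma_n \uparrow \tau_0 := \inf\{s : X_s = 0\}$, so this yields $P(\tau_0 \le T) = 0$ as required.

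By measurable selection applied to the interval-valued maps in \eqref{def:b(x),a(x)}, I write $\beta_s^P = b_0(s) + b_1(s) X_s$ and, using $\underline a_0 = \bar a_0 = 0$ together with $X_s > 0$ on $[0,\sigma_n)$, $\alpha_s = (a_1(s) X_s)^{2\gamma(s)}$ for predictable selectors taking values in the respective coordinate intervals of $\Theta$. Applying It\^o's formula to $f(x) = -\log x$ on $[0, \sigma_n \wedge T]$ then yields
\begin{align*}
-\log X_{t\wedge\sigma_n} \;=\; -\log x_0 + \int_0^{t\wedge\sigma_n}\! g(X_s;\theta_s)\,ds \;-\; \int_0^{t\wedge\sigma_n}\! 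X_s^{-1}\,dM_s^P,
\end{align*}
with $g(x;\theta) := -b_0/x - b_1 + \tfrac{1}{2}(a_1)^{2\gamma} x^{2\gamma - 2}$.

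The crux is to establish the uniform bound $C := \sup_{x > 0,\,\theta\in\Theta} g(x;\theta) < \infty$. This is precisely where both hypotheses $\underline b_0 > 0$ and $\underline\gamma > 1/2$ are essential: the ratio of the positive It\^o correction to $b_0/x$ equals $\tfrac{(a_1)^{2\gamma}}{2 b_0}\,x^{2\gamma - 1}$, which vanishes as $x \downarrow 0$ uniformly in $\theta\in\Theta$ because $2\underline\gamma - 1 > 0$ and $\Theta$ is compact. Hence near zero the negative drift $-\underline b_0/x$ strictly dominates and $g \to -\infty$; for $x \ge 1$ the assumption $\bar\gamma \le 1$ gives $x^{2\gamma - 2} \le 1$, so $g$ stays bounded; on compact subintervals of $(0,\infty)$ bounded away from zero the claim follows from continuity of $g$ and compactness of $\Theta$. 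In the borderline case $\underline\gamma = 1/2$ this argument breaks down and a delicate Feller-type boundary analysis, familiar from the CIR setting, would be required; this is where I expect the main difficulty.

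The remaining steps are routine. I localise further at $\rho_k := \inf\{s : X_s \ge k\}$; on $[0,\sigma_n \wedge \rho_k \wedge T]$ the integrand of the bracket of $\int X^{-1}\,dM^P$ is bounded, making the stochastic integral a true martingale. Taking $P$-expectations and letting $k \to \infty$ (which is legitimate because the at-most-linear growth of $\beta^P$ and at-most-quadratic growth of $\alpha$ yield $\rho_k \to \infty$ $P$-a.s.~via standard Gronwall-type second-moment estimates on $X$) gives $\E^P[-\log X_{t\wedge\sigma_n}] \le -\log x_0 + CT$. Splitting on $\{\sigma_n\le t\}$, where $X_{\sigma_n}=1/n$, and using $-\log X_t \ge -(\log X_t)^+$ on the complement yields
\[
\log n \cdot P(\sigma_n \le t) \;\le\; -\log x_0 + CT + \E^P[(\log X_t)^+],
\]
and the last term is finite via $\E^P[(\log X_t)^+] \le \log(1 + \E^P[X_t]) < \infty$ by concavity of $\log$. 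Dividing by $\log n$ and letting $n \to \infty$ concludes.
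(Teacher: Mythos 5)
Your argument is correct, but it takes a genuinely different route from the paper. The paper invokes the integral test of Theorem 5.2 in \cite{criens2020no}: after bounding the diffusion coefficient from above by $\bar a=(a_1\varepsilon)^{2\gamma}$ and the drift-to-diffusion ratio from below by $u_0x^{-2\gamma}$ on a neighbourhood $(0,\varepsilon)$ of the boundary, it shows that the associated scale-type functional $v(\underline u,\bar a)(x)$ explodes as $x\to 0$ (the final step being an application of l'H\^opital's rule), which is a Feller-test argument tailored to the non-dominated setting. You instead run a Lyapunov/McKean-type argument with $f(x)=-\log x$, showing that the drift of $-\log X$ is bounded above uniformly over $\Theta$ and over $x>0$, and concluding via $\E^P[-\log X_{t\wedge\sigma_n}]\le -\log x_0+CT$ that $P(\sigma_n\le T)=O(1/\log n)$. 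Both proofs hinge on exactly the same quantitative mechanism — the inward drift $\underline b_0/x$ dominates the It\^o correction $\tfrac12 a_1^{2\gamma}x^{2\gamma-2}$ near zero precisely because $2\underline\gamma-1>0$ — and your identification of where $\underline b_0>0$ and $\underline\gamma>1/2$ enter is accurate, as is your remark that the case $\gamma=\nicefrac12$ requires the separate Feller condition (the paper handles it via Proposition~1 of \cite{FadinaNeufelSchmidt2019}). What your route buys is self-containedness (no external boundary-classification theorem), a proof that works measure-by-measure directly from the inclusion $(\beta^P,\alpha)\in(b(X),a(X))$ without comparison to an auxiliary one-dimensional diffusion, and an explicit rate for $P(\inf_{s\le T}X_s\le 1/n)$; what the paper's route buys is sharpness of the criterion and shorter bookkeeping, since the localisation at $\rho_k$, the true-martingale verification, and the integrability $\E^P[(\log X_t)^+]<\infty$ (which needs the moment bound of Lemma~\ref{lemma 1} chained over $[0,T]$, a point you compress into ``routine'') are all absorbed into the cited test. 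Both are valid; yours would be accepted as a complete alternative proof once those routine steps are written out.
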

 \begin{proof}
  For the proof we rely on the integral test proposed in Theorem 5.2 in \cite{criens2020no}. To this end we consider a sufficiently small subset $(0,\varepsilon) \subset E$ such that $b_0+b_1 x > 0$ for all $x \in (0,\varepsilon)$. 

  To begin with, we  observe the estimates
  \begin{align}
  \begin{aligned}
    (a_1 x)^{2 \gamma} & \le  (a_1 \varepsilon)^{2 \gamma} =: \bar a, \\
    \frac{b_0+b_1 x}{(a_1 x)^{2 \gamma}} & \ge \frac{\inf_{y \in (0,\varepsilon)}(b_0+b_1y)}{a_1^{2 \gamma}} x^{-2 \gamma} =: u_0 x^{-2 \gamma} =:\underline u(x)
  \end{aligned}
  \end{align}
  with constants $\bar a >0$ and $u_0 >0$.

  The next step is to show that  $v(\underline u, \bar a)(x)$ from Equation (5.5) in \cite{criens2020no}  explodes as $x \to 0$, where
  \begin{align}\label{temp 873}
    v(\underline u, \bar a)(x) = \int_{x_0}^x \exp\Big( -2 \int_{x_0}^y \underline u(z) dz\Big) \int_{x_0}^y \frac{2 \exp\Big(\int_{x_0}^u 2 \underline u(z) dz\Big)}{\bar a} du \, dy.
  \end{align}
   So in the following we consider $x<\nicefrac{x_0}2$. Then,  
  \begin{align}
    \eqref{temp 873} &\ge \frac{2}{\bar a} \int_{\nicefrac{x_0}{2}}^x \exp\Big( -2 \int_{x_0}^y \underline u(z) dz\Big) \int_{x_0}^y \exp\Big(\int_{x_0}^u 2 \underline u(z) dz\Big) du \, dy.
  \end{align}
  Then we can estimate (since $y < \nicefrac{x_0}2$), setting $\beta = 2 \gamma -1 >0$,
  \begin{align}
    \int_{x_0}^y \exp\Big(\int_{x_0}^u 2 \underline u(z) dz\Big) du &\ge \int_{x_0}^{\nicefrac{x_0}2} \exp\Big(\int_{x_0}^u 2 \underline u(z) dz\Big) du\notag\\
    & =  \int_{x_0}^{\nicefrac{x_0}2}  \exp\Big( 2 u_0 \frac{(x_0)^{-\beta} - u^{-\beta}}{\beta}  \Big) du \notag \\
    & \ge \frac{x_0}{2}  \exp\Big( 2 u_0 \frac{(x_0)^{-\beta} - (\nicefrac{x_0}2)^{-\beta}}{\beta} \Big) =: A_1 \label{eqA2}
    \end{align}
    for some constant $A_1>0$.  Up to constants we can now estimate $v(\underline u, \bar a)$ from below by
  \begin{align*}
     \int_{\nicefrac{x_0}{2}}^x \exp\Big( -2 \int_{x_0}^y u_0 z^{-2 \gamma} dz\Big)   dy 
     &= e^{\frac{-2u_0}{\beta x_0^{\beta}}}  \int_{\nicefrac{x_0}{2}}^x \exp\Big( \frac{2 u_0}{\beta} y^{-\beta} \Big) dy \\
     &= \frac{-1}{\beta} e^{\frac{-2u_0}{\beta x_0^{\beta}}} \int_{(\nicefrac{x_0}{2})^{-\beta}}^{x^{-\beta}} e^{\frac{2 u_0}{\beta}  z }  z^{\beta'} dz
  \end{align*}
  with $\beta'=-\beta^{-1}-1.$ Now it is easy to see that the integral on the right hand side explodes as $x\to 0$ by l'Hospital's rule.
 \end{proof}

As a consequence of Lemma \ref{lem:proper} we obtain that the state space $E$ is proper in the following cases:
\begin{enumerate}[(i)]
	\item $E=\R$ and $\underline a_0 >0$, %
	\item $E=\R_{>0}$, $\gamma = \nicefrac 1 2$ : $\underline b_0 > 0$, $\underline a_0>0$,  and $\underline b_0 >\bar a_1/2$, 
	\item $E=\R_{>0}$, $\nicefrac 1 2 < \underline \gamma \le \bar \gamma \le 1$: $\underline b_0>0$, $\underline a_0= \bar a_0 = 0$, and $\underline a_1>0$.
\end{enumerate}
The next proposition establishes conditions such that the set of semimartingale measures $\cA(t,x,\Theta)$ is not empty.
\begin{proposition}
	[Existence of generalized affine process]\label{prop:existence}
Let $\gamma\in [1/2,1]$. If $E=\mathbb{R}$ assume $b_0,\ a_0>0$ and $a_1=0$ while for $E=\mathbb{R}_{>0}$ we assume $b_0 >0, a_0=0$ and $a_1>0$ and, for $\gamma = 1/2$, additionally  $b_0 > a_1/2$. Then for all $t\in[0,T]$ and $x\in E$ there exists a unique strong solution to the SDE (\ref{equ generalized affine process}).
\end{proposition}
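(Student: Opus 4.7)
The plan is to split the argument along the two possible state spaces and in each case appeal to a classical one-dimensional strong existence/uniqueness result, together with the non-attainability results that have just been established.

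For $E=\mathbb{R}$ with $a_1=0$, the SDE (\ref{equ generalized affine process}) collapses to
\[
dX_t = (b_0+b_1 X_t)\,dt + a_0^{\gamma}\,dW_t,
\]
i.e.\ a linear SDE with globally Lipschitz drift and constant (non-degenerate, since $a_0>0$) diffusion coefficient. Classical existence and uniqueness for SDEs with Lipschitz coefficients then yield a unique strong solution on $[0,T]$ for every initial value $x\in\mathbb{R}$, and no state-space issue arises.

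For $E=\mathbb{R}_{>0}$ with $a_0=0$ and $a_1>0$ the SDE reads
\[
dX_t = (b_0+b_1 X_t)\,dt + (a_1 X_t)^{\gamma}\,dW_t.
\]
I would proceed in three steps. First, apply the Engelbert--Schmidt criterion from \cite{engelbert1985one,engelbert1985solutions} to obtain a weak solution on $(0,\infty)$ up to a potential exit time: the diffusion coefficient $x\mapsto (a_1 x)^{\gamma}$ is strictly positive on $\mathbb{R}_{>0}$ and $x\mapsto (a_1 x)^{-2\gamma}$ is locally integrable there, and the drift is locally bounded and Lipschitz, so the Engelbert--Schmidt conditions (with a standard Girsanov/drift reduction) are met. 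Second, I would invoke the Yamada--Watanabe pathwise uniqueness criterion: the drift $b_0+b_1 x$ is Lipschitz and $x\mapsto (a_1 x)^{\gamma}$ is locally $\gamma$-H\"older with $\gamma\ge 1/2$, which is exactly the regularity required. Combining weak existence with pathwise uniqueness yields a unique strong solution.

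The remaining obstacle, and in my view the crux of the argument, is to verify that this solution does not leave the state space $\mathbb{R}_{>0}$ (neither by hitting $0$ nor by exploding to $+\infty$). Non-explosion at $+\infty$ follows from a standard Lyapunov argument, since the affine drift and the polynomial growth of the diffusion coefficient (bounded by a quadratic) rule out finite-time blow-up. Non-attainability of $0$ is handled by the two previously established results: for $\gamma=1/2$ this is precisely the Feller test under the additional assumption $b_0>a_1/2$ (cf.\ Proposition~1 of \cite{FadinaNeufelSchmidt2019}), and for $\gamma>1/2$ it follows directly from Lemma \ref{lem:proper} applied to the degenerate parameter set $\Theta=\{(b_0,b_1,0,a_1,\gamma)\}$, whose hypotheses are exactly those assumed in the present proposition.
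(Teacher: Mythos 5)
Your proposal is correct and follows essentially the same route as the paper: the paper invokes Corollary 5.5.16 of \cite{KaratzasShreve1991}, which packages exactly the Engelbert--Schmidt weak existence conditions together with the H\"older-type pathwise uniqueness criterion $\int_{0+}h^{-2}(u)\,du=\infty$ for $h(z)=\kappa z^{\gamma}$, $\gamma\ge 1/2$, that you identify as Yamada--Watanabe, and it handles the boundary in the same way (linear growth against explosion to $+\infty$, Proposition~1 of \cite{FadinaNeufelSchmidt2019} for $\gamma=1/2$, Lemma~\ref{lem:proper} for $\gamma>1/2$). The only cosmetic difference is that you dispatch the $E=\mathbb{R}$, $a_1=0$ case directly as a Lipschitz linear SDE, while the paper runs both state spaces through the same general criterion.
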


\begin{proof}
The theorem follows from Corollary 5.5.16 in \cite{KaratzasShreve1991} using the results from Engelbert and Schmidt (\cite{engelbert1985one,engelbert1985solutions}). First note that in the case $E=\mathbb{R}_{>0}$ the function $1/(a_0+a_1 x)^{2\gamma}$ is locally integrable for any $x\in \mathbb{R}_{>0}$ if $a_0=0$ and $a_1>0$. If $E=\mathbb{R}$, the local integrability follows because $a_0>0$ and $a_1=0$ in that case. Further, for any $x,y\in \mathbb{R}$ we have
$$
|(b_0+b_1x)-(b_0+b_1y)|\leq \kappa |x-y|
$$
where
$\kappa\equiv \max\{|a_0|,|a_1|,|b_0|,|b_1|\}\in \mathbb{R}.$
Moreover, we have
$$
|(a_0+a_1x)^{\gamma}-(a_0+a_1 y)^{\gamma}|\leq \kappa |x-y|^{\gamma},
$$
i.e. the function $h$ in Corollary 5.16 in \cite{KaratzasShreve1991} is given by the strictly increasing function $h(z)=\kappa z^\gamma$ with $h(0)=0$. Since we chose $\gamma\in [1/2,1]$, the function $h$ satisfies the condition
$$
\int_{(0,\epsilon)} h^{-2}(u)du=\infty\quad \forall \epsilon>0.
$$
Further, the conditions 
\begin{itemize}
\item[(ND)] $(a_0+a_1 x)^{2\gamma}>0$ for all $x\in E$ and
\item[(LI)] for all $x\in E$ there exists an $\epsilon>0$ such that $\int_{x-\epsilon}^{x+\epsilon} \frac{|b_0+b_1 y|}{(a_0+a_1 y)^{2\gamma}} dy <\infty$
\end{itemize} 
in \cite{KaratzasShreve1991} are satisfied when we choose $a_0>0,\, a_1=0$ if $E=\mathbb{R}$ and $a_0=0, \,a_1>0$ if $E=\mathbb{R}_{>0}$. Thus, there exists a strong solution to the SDE (\ref{equ generalized affine process}), possibly up to an explosion time. Explosions to $+\infty$ in finite time do not occur since we have at most linear growth. 

If the state space is $\R_{>0}$ and $\gamma = \frac 1 2$, then the process $X$ does not reach zero due to Proposition 1 in \cite{FadinaNeufelSchmidt2019}. If $\gamma \in (1/2,1]$, Lemma \ref{lem:proper} implies that again $X$ does not reach zero and the conclusion follows.
\end{proof}

\subsection{Proof of the nonlinear Kolmogorov equation}

In this section, we prove Theorem \ref{thm:Kolmogorov}, which we repeat for the reader's convenience.

\begin{customthm}{2.2}
Consider a family of nonlinear generalized affine processes with state space $E$ and a Lipschitz continuous payoff function $\psi:E\rightarrow\mathbb{R}$. Then,
$$
v(t,x):=\sup_{P\in \cA(t,x,\Theta)}\mathbb{E}^P[\psi(X_T)],\quad x\in E
$$
is a viscosity solution to the PDE (\ref{equ nonlinear PDE}).
\end{customthm}

For the proof we will need some preliminary tools.

\begin{lemma}\label{lemma 1} 
Let $\gamma \le 1$. 
For all $q\geq 1$ there exists an $0<\epsilon\equiv \epsilon(q)<1$ such that for all $0<h\leq \epsilon$, all $t\in[0,T-h]$ and $x\in E$ it holds that
$$
\sup_{P\in\cA(t,x,\Theta)} \mathbb{E}^P\left[\sup_{0\leq s\leq h} |X(t+s)-x|^q\right]\leq c \big(h^{q/2}+h^q\big)
$$
for some constant $c=c(x,q)>0$.
\end{lemma}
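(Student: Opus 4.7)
The statement is a uniform moment bound of BDG--Gronwall type, with the only subtlety being that it must hold uniformly over the whole family $\cA(t,x,\Theta)$. The plan is to reduce the question to standard moment estimates for continuous semimartingales with linearly growing characteristics, using that the bound $\gamma\le 1$ makes the diffusion coefficient grow at most linearly.

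\textbf{Step 1: Set-up.} Fix $P\in\cA(t,x,\Theta)$ and write $X_{t+s}-x=\int_t^{t+s}\beta_u^P\,du+M_s^P$ where $M^P$ is a continuous local $P$-martingale with quadratic variation $\langle M^P\rangle_s=\int_t^{t+s}\alpha_u\,du$. By the definition of $\cA(t,x,\Theta)$ and the fact that $\Theta$ is compact, there exists a constant $K=K(\Theta)$ (independent of $P$) such that $dt\otimes dP$--a.s.\
\[
|\beta_u^P|\le K(1+|X_u|),\qquad \alpha_u\le K(1+|X_u|)^{2\gamma}\le K(1+|X_u|)^2,
\]
where the last inequality uses $\gamma\le 1$.

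\textbf{Step 2: Uniform global moment bound.} I would first show
\[
M_q:=\sup_{P\in\cA(t,x,\Theta)}\mathbb{E}^P\Big[\sup_{t\le u\le T}|X_u|^q\Big]<\infty,\qquad q\ge 2.
\]
By Jensen applied to $\big|\int_t^{t+s}\beta_u^P du\big|^q\le s^{q-1}\int_t^{t+s}|\beta_u^P|^q du$ and Burkholder--Davis--Gundy applied to $M^P$, one gets for every $r\in[t,T]$
\[
\mathbb{E}^P\Big[\sup_{t\le u\le r}|X_u-x|^q\Big]\le C_q\int_t^r\Big(1+\mathbb{E}^P\big[\sup_{t\le v\le u}|X_v|^q\big]\Big)du,
\]
with a constant $C_q$ depending only on $q$, $T$, $K$. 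Using $|X_u|^q\le 2^q(|x|^q+|X_u-x|^q)$ and Gronwall's inequality yields the desired uniform bound $M_q=M_q(x,q,T,\Theta)<\infty$. Since the constants in BDG and Gronwall depend only on $q$, $K$, $T$ and $x$, the supremum over $P$ passes inside.

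\textbf{Step 3: Short-time bound.} For $q\ge 2$ and $h\le 1$, rerun the decomposition but now over $[t,t+h]$. The drift contribution is controlled by
\[
\mathbb{E}^P\Big[\sup_{0\le s\le h}\Big|\int_t^{t+s}\beta_u^P du\Big|^q\Big]\le h^{q-1}\int_t^{t+h}\mathbb{E}^P\big[|\beta_u^P|^q\big]du\le C\,(1+M_q)\,h^q,
\]
while BDG combined with Jensen on $\langle M^P\rangle_h^{q/2}=\big(\int_t^{t+h}\alpha_u du\big)^{q/2}\le h^{q/2-1}\int_t^{t+h}\alpha_u^{q/2}du$ gives
\[
\mathbb{E}^P\Big[\sup_{0\le s\le h}|M_s^P|^q\Big]\le C_q\,h^{q/2-1}\int_t^{t+h}\mathbb{E}^P\big[\alpha_u^{q/2}\big]du\le C(1+M_q)\,h^{q/2},
\]
where I use $\alpha_u^{q/2}\le K^{q/2}(1+|X_u|)^{q\gamma}\le C(1+|X_u|^q)$ since $q\gamma\le q$. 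Summing both estimates yields the claimed bound $c(x,q)(h^{q/2}+h^q)$, valid for all $P\in\cA(t,x,\Theta)$, hence also for the supremum.

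\textbf{Step 4: Extension to $1\le q<2$.} For these $q$ apply Jensen's inequality to the already established $q=2$ bound, namely $\mathbb{E}^P\big[\sup_{s\le h}|X_{t+s}-x|^q\big]\le\big(\mathbb{E}^P[\sup_{s\le h}|X_{t+s}-x|^2]\big)^{q/2}\le c'(x)(h+h^2)^{q/2}\le c(x,q)(h^{q/2}+h^q)$, and take the supremum over $P$.

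\textbf{Expected difficulties.} Nothing conceptually hard, but two points deserve care: \emph{(i)} ensuring that all estimates are independent of the choice of $P\in\cA(t,x,\Theta)$ — this is precisely why we must quantify the bounds on $|\beta^P|$ and $\alpha$ through the compactness of $\Theta$ alone, not through $P$; and \emph{(ii)} handling the nonlinear exponent $\gamma\in[\underline\gamma,\bar\gamma]$ uniformly, where the hypothesis $\gamma\le 1$ is essential to retain the linear-growth structure that makes the Gronwall step work. The restriction $h\le\epsilon(q)<1$ only serves to make the expression $h^{q/2}+h^q$ cleanly dominate the estimates; the argument itself extends to all $h\le T-t$.
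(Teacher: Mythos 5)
Your proof is correct, but it takes a genuinely different route from the paper's. The paper works only on the short interval $[t,t+h]$ and derives, via BDG, the pointwise bounds $|\beta^P_u|\le \mathcal K(1+\sup_{v}|X_{t+v}|)$ and $\alpha_u\le (\mathcal K(1+\sup_v|X_{t+v}|))^{2\gamma}$, an inequality of the form $\E^P[\sup_s|X_{t+s}-x|^q]\le C\mathcal K^q(h^{q/2}+h^q)\,\E^P[\sup_s|X_{t+s}-x|^q]+C\mathcal K^q(1+|x|^q)(h^{q/2}+h^q)$, and then \emph{absorbs} the first term by choosing $\varepsilon$ so small that $C\mathcal K^q(\varepsilon^{q/2}+\varepsilon^q)<1$ --- this is precisely where the restriction $0<h\le\epsilon(q)<1$ in the statement comes from. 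You instead first establish a uniform global moment bound $M_q=\sup_P\E^P[\sup_{t\le u\le T}|X_u|^q]<\infty$ by Gronwall and then feed it back into the short-time estimates; this costs an extra lemma and a case split at $q=2$ (the paper avoids the split by bounding $\alpha_u$ pointwise by its running supremum rather than applying Jensen to the time integral), but it buys a bound valid for all $h\le T-t$ and correctly identifies the restriction $h\le\epsilon$ as inessential to your argument, whereas it is essential to the paper's. Two small caveats: both your Gronwall step and the paper's absorption step implicitly require the a priori finiteness of the quantity being bounded, which should be secured by a localization with the stopping times $\tau_n=\inf\{u:|X_u|\ge n\}$ followed by Fatou (the paper glosses over this too); and in your Step 1 the bound on $\alpha$ should read $\alpha_u\le\bigl(K(1+|X_u|)\bigr)^{2\gamma}\le K^2(1+|X_u|)^2$ for $K\ge1$, since the constant also sits inside the power $2\gamma$ --- this changes nothing beyond the value of the constant.
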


The proof is a modification of the proof of Lemma 3 in \cite{FadinaNeufelSchmidt2019} and Lemma 5.2 in \cite{NeufeldNutz2017} and takes the generalized setting into account. %

\begin{proof}
	   Consider $P \in \cA(t,x,\Theta)$ and denote by $X_s=x+B^P_s+M^P_s$, $s \ge t$, the semimartingale representation of $X$ from Equation \eqref{eq:sem mart char}. We will repeatedly use the elementary inequality  
   \begin{align}\label{eq:elementaryineq} 
        (a_1+a_2)^q \le 2^{q-1}( a_1^q + a_2^q) 
   \end{align}
   and denote $c_q:=2^{q-1}$.

   First, the  Burkholder--Davis--Gundy (BDG) inequality (see Theorem IV.4.1 in  \cite{RevuzYor}) together with Jensen's inequality and \eqref{eq:elementaryineq} yields for any $h \in [0,T-t]$ that
	\begin{align}\label{temp:308}
	   \E^P\Big[\sup_{0\leq s \leq h}|X_{t+s}-x|^q\Big] &\leq c_q \E^P\Big[\sup_{0\leq s \leq h}|M^P_{t+s}|^q\Big] + c_q \E^P\Big[\sup_{0\leq s \leq h}|B^P_{t+s}|^q\Big]\\
	       &\leq c_q \widetilde C_q \E^P\bigg[\Big(\int_t^{t+h} \alpha_u\,du\Big)^{\nicefrac q 2}\bigg] + c_q\E^P\bigg[\Big(\int_t^{t+h} |\beta^P_u|\,du\Big)^q\bigg].\nonumber
	\end{align}
    Note that the constant $\widetilde C_q \ge 1$ from the BDG inequality does depend on $q$ only.

	We define $ \mathcal{K}= 1+ |\underline{b}^0|+|\underline{b}^1|+|\bar{b}^0|+|\bar{b}^1|+\bar{a}^0+\bar{a}^1$ and choose any $0<\varepsilon=\varepsilon(q)<1$
	small enough such that it satisfies 
	\begin{align} \label{cond:epsilon}
	1-c_q^3\widetilde C_q {\mathcal{K}}^q  (\varepsilon^q+\varepsilon^{q/2})>0. \end{align} 
	Let us verify that such a fixed $\varepsilon$ satisfies the desired property: 
	by the very definition of $P \in \cA(t,x,\Theta)$, we have on $[t,t+h]$ that both $\alpha$ and $|\beta^P|$ are bounded from above by 
	$({\mathcal{K}}+{\mathcal{K}}\sup_{0\leq s \leq h}|X_{t+s}|)^{2 \gamma}\ge 1$ and ${\mathcal{K}}+{\mathcal{K}}\sup_{0\leq s \leq h}|X_{t+s}|\ge 1$, respectively, since they are GA-dominated. 
	This, together with  Jensen's inequality, yields that
	\begin{align}
	 \E^P\bigg[\Big(\int_t^{t+h} \alpha_u\,du\Big)^{\nicefrac q 2}\bigg] 
	    &\leq  h^{q/2}\E^P\bigg[\Big({\mathcal{K}} + {\mathcal{K}} \sup_{0\leq s \leq h}|X_{t+s}|\Big)^{2 \gamma \nicefrac q 2}\bigg] \label{temp:542} \\
	    &\leq  h^{q/2}\E^P\bigg[\Big({\mathcal{K}} + {\mathcal{K}} \sup_{0\leq s \leq h}|X_{t+s}|\Big)^{\gamma q}\bigg]. \notag
	\end{align}
	Since $ {\mathcal{K}} + {\mathcal{K}} \sup_{0\leq s \leq h}|X_{t+s}| \ge 1$ and $\gamma \le 1$, we have that 
	$$ \Big({\mathcal{K}} + {\mathcal{K}} \sup_{0\leq s \leq h}|X_{t+s}|\Big)^{\gamma q} \le \Big({\mathcal{K}} + {\mathcal{K}} \sup_{0\leq s \leq h}|X_{t+s}|\Big)^{q}. $$
	 Then,
	\begin{align*}\eqref{temp:542}
	&\leq  h^{q/2}c_q\bigg({\mathcal{K}}^{q }  + {\mathcal{K}}^{q} \E^P\Big[\Big(\sup_{0\leq s \leq h}|X_{t+s}|\Big)^q\Big]\bigg)\\
	    &\leq  h^{q/2}c_q^2\bigg({\mathcal{K}}^q  + {\mathcal{K}}^q |x|^q + {\mathcal{K}}^q \E^P\Big[\Big(\sup_{0\leq s \leq h}|X_{t+s}-x|\Big)^q\Big]\bigg)
	\end{align*}
 Since the drift is affine dominated, we obtain in a similar way that 
	\begin{align*}
	 \E^P\bigg[\Big(\int_t^{t+h} |\beta^P_u|\,du\Big)^q \bigg] 
	    &\leq  h^q\E^P\Big[\Big({\mathcal{K}}+{\mathcal{K}}\sup_{0\leq s \leq h}|X_{t+s}|\Big)^q\Big]\\
	    &\leq h^{q} c_q^2\bigg({\mathcal{K}}^q  + {\mathcal{K}}^q |x|^q + {\mathcal{K}}^q \E^P\Big[\Big(\sup_{0\leq s \leq h}|X_{t+s}-x|\Big)^q\Big]\bigg).
	\end{align*}
	Inserting these inequalities into \eqref{temp:308}, considering $h\leq \varepsilon$, and noting that  $\tilde C_q \ge 1$ 
	implies that
	\begin{align}\label{eq:391}
	\lefteqn{\E^P\Big[\sup_{0\leq s \leq h}|X_{t+s}-x|^q\Big]}\hspace{1.5cm}\\
	\leq & 	c_q^3 \, \widetilde C_q {\mathcal{K}}^q(h^{q/2} + h^q) \E^P\Big[\sup_{0\leq s \leq h}|X_{t+s}-x|^q\Big]
	       +c_q^3 \, \widetilde C_q  {\mathcal{K}}^q (1+|x|^q)(h^{q/2} + h^q) \nonumber\\
	\leq & 	c_q^3 \, \widetilde C_q {\mathcal{K}}^q(\varepsilon^{q/2} + \varepsilon^q) \E^P\Big[\sup_{0\leq s \leq h}|X_{t+s}-x|^q\Big]
	       +c_q^3 \, \widetilde C_q  {\mathcal{K}}^q (1+|x|^q)(h^{q/2} + h^q).\nonumber
	\end{align}
	Since $h\leq \varepsilon$ and we chose $0<\varepsilon<1$ such that \eqref{cond:epsilon} holds, we obtain for the constant 
	$c:=\frac{c_q^3 \, \widetilde C_q  {\mathcal{K}}^q (1+|x|^q)}{1-c_q^3 \, \widetilde C_q {\mathcal{K}}^q(\varepsilon^{q/2} + \varepsilon^q)}>0$, being  independent of $t,h,P$, that
	\begin{equation*}
	     \E^P\Big[\sup_{0\leq s \leq h}|X_{t+s}-x|^q\Big] \leq c\,\big(h^{q/2}+h^q\big).
	\end{equation*}
	As $P \in \cA(t,x,\Theta)$ was  chosen  arbitrarily, the claim is proven.
\end{proof}

\begin{lemma}\label{le:value-funct-cont}
Consider a nonlinear generalized affine process $\cA(0,x,\Theta)$ and a derivative with Lipschitz-continuous payoff function $\psi: E\rightarrow \mathbb{R}$. Then the value function 
$$
v: [0,T]\times E\rightarrow \mathbb{R}, \, (t,x)\mapsto v(t,x)
$$
is jointly continuous. In particular, $v(t,x)$ is locally $\nicefrac 12$-H\"{o}lder continuous in $t$ and Lipschitz-continuous in $x$.
\end{lemma}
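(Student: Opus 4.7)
The plan is to establish the two continuity estimates separately and then combine them by the triangle inequality: (i) $|v(t,x)-v(t,y)|\le L_v|x-y|$ for all $t\in[0,T]$ and $x,y\in E$, and (ii) $|v(t,x)-v(t+h,x)|\le K\sqrt{h}$ for all sufficiently small $h>0$, with $K$ locally uniform in $x$. Both arguments rely on a pathwise coupling of semimartingale laws in $\cA$, the moment bound of Lemma~\ref{lemma 1}, and the dynamic programming principle of Proposition~\ref{prop DPP}.

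\textbf{Lipschitz in $x$.} Fix $t\in[0,T]$ and $x,y\in E$. Given $P\in\cA(t,x,\Theta)$, a measurable selection applied to the interval-valued maps $b(\cdot),a(\cdot)$ yields predictable $\Theta$-valued processes $(b_0^s,b_1^s,a_0^s,a_1^s,\gamma^s)$ and a Brownian motion $W$ on a possibly enlarged probability space such that $X$ is the unique strong solution of $dX_s=(b_0^s+b_1^s X_s)\,ds+(a_0^s+a_1^s X_s^+)^{\gamma^s}\,dW_s$ with $X_t=x$. Let $Y$ be the strong solution of the same SDE (same $W$, same parameter processes) started at $y$; then its law $P'$ lies in $\cA(t,y,\Theta)$, since by construction the characteristics of $Y$ belong to $b(Y_s)\times a(Y_s)$. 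Because $\gamma^s\ge \underline\gamma\ge 1/2$ uniformly, the diffusion coefficient $z\mapsto(a_0^s+a_1^s z^+)^{\gamma^s}$ is $1/2$-Hölder in $z$ with constant controlled by $\Theta$, so a Yamada--Watanabe approximation of $|\cdot|$ applied to $X-Y$, combined with the Lipschitz bound on the drift and Gronwall's lemma, yields $\mathbb{E}[|X_T-Y_T|]\le e^{CT}|x-y|$ with $C$ depending only on $\Theta$ and $T$. Using the Lipschitz property of $\psi$ and swapping the roles of $x$ and $y$ gives $|v(t,x)-v(t,y)|\le L_\psi\,e^{CT}|x-y|=:L_v|x-y|$.

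\textbf{Local $1/2$-Hölder in $t$.} Fix $x\in E$ and $t\in[0,T)$, and let $h\in(0,\varepsilon(2)]$ with $\varepsilon(2)$ the constant from Lemma~\ref{lemma 1} for $q=2$. Proposition~\ref{prop DPP} applied to $\tau\equiv t+h$ gives $v(t,x)=\sup_{P\in\cA(t,x,\Theta)}\mathbb{E}^P[v(t+h,X_{t+h})]$. Subtracting $v(t+h,x)$, using the Lipschitz bound in $x$ just proved, and then Jensen's inequality combined with Lemma~\ref{lemma 1}, one obtains $|v(t,x)-v(t+h,x)|\le L_v\sup_P\mathbb{E}^P[|X_{t+h}-x|]\le L_v\sqrt{c(x,2)(h+h^2)}\le K\sqrt{h}$, where $K$ is locally uniform in $x$. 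Joint continuity follows at once from the triangle inequality $|v(t,x)-v(t',x')|\le L_v|x-x'|+K\sqrt{|t-t'|}$.

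\textbf{Main obstacle.} The delicate step is the pathwise coupling used in the Lipschitz argument: since the diffusion coefficient is only $\gamma^s$-Hölder (and in particular not Lipschitz when $\gamma^s=1/2$), one cannot directly close a Gronwall estimate on $\mathbb{E}[|X-Y|^2]$. The hypothesis $\underline\gamma\ge 1/2$ is what permits a Yamada--Watanabe approximation to deliver an $L^1$-stability estimate at precisely the rate needed to translate Lipschitz regularity of $\psi$ into Lipschitz regularity of $v$. A secondary and more routine subtlety is the measurable selection producing the $\Theta$-valued predictable parameters from a generic semimartingale law $P\in\cA(t,x,\Theta)$; this rests on the compactness of $\Theta$ and the continuity of the maps $b(\cdot)$, $a(\cdot)$ in the Hausdorff sense.
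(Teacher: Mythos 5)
Your proof is correct in substance, and the time-regularity and gluing steps coincide with the paper's: both apply the dynamic programming principle with $\tau\equiv t+h$, bound $|v(t,x)-v(t+h,x)|$ by $L\,\sup_P\E^P[|X_{t+h}-x|]$ via the spatial Lipschitz estimate, invoke Lemma~\ref{lemma 1}, and conclude joint continuity by the triangle inequality (the paper works with $q=1$ directly rather than $q=2$ plus Jensen, an immaterial difference). Where you genuinely diverge is the Lipschitz-in-$x$ step. The paper disposes of it in one line via a shift coupling, bounding $|v(t,x)-v(t,y)|$ by $\sup_{P\in\cA(t,x,\Theta)}\E^P[|\psi(X_T)-\psi(y-x+X_T)|]$; this implicitly uses a translation correspondence between $\cA(t,x,\Theta)$ and $\cA(t,y,\Theta)$, which is delicate here precisely because the admissible characteristic sets $b(\cdot)$, $a(\cdot)$ are state-dependent, and yields the sharp constant $L=L_\psi$. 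Your synchronous coupling avoids that issue entirely: by measurable selection you realize any $P\in\cA(t,x,\Theta)$ as an SDE solution and re-solve the same equation from $y$, so the comparison law manifestly belongs to $\cA(t,y,\Theta)$ since its characteristics lie in $b(Y_s)\times a(Y_s)$ by construction. The price is heavier machinery (martingale representation on an enlarged space, a Yamada--Watanabe $L^1$-stability estimate exploiting $\underline\gamma\ge\nicefrac12$, Gronwall) and a worse constant $L_\psi e^{CT}$, which still suffices for the lemma. Two points you should tighten if you pursue your route: strong existence of the coupled solution $Y$ driven by the same Brownian motion but with merely predictable, random $\Theta$-valued coefficients needs an explicit argument (uniform boundedness of the parameters plus the uniform $\nicefrac12$-Hölder modulus of $z\mapsto(a_0+a_1z^+)^{\gamma}$ makes this standard, but it is not free), and when $E=\R_{>0}$ you should note that membership in $\cA(t,y,\Theta)$ only requires the characteristics constraints on the canonical space $C([0,T])$, so properness of the state space is not an additional obstruction here.
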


\begin{proof}
The statement follows similarly to Lemma 4 in \cite{FadinaNeufelSchmidt2019} and Lemma 5.3 in \cite{NeufeldNutz2017}.
For $x\neq y$ and fixed $t\in[0,T]$ it holds that
$$
|v(t,x)-v(t,y)|\leq \sup_{P\in \cA(t,x,\Theta)}\mathbb{E}^P
\left[ |\psi(X_T)-\psi(y-x+X_T)|\right]\leq L|y-x|
$$
where $L$ is the Lipschitz constant of the function $\psi$. Thus, the value function is Lipschitz-continuous in $x$.

For the locally $\gamma$-H\"{o}lder continuity, let $t\in[0,T)$ and $0\leq u\leq T-t$ small enough. Then the Lipschitz-continuity, the dynamic programming principle in Proposition \ref{prop DPP} and Lemma \ref{lemma 1} imply that
$$
\begin{array}{ccl}
|v(t,x)-v(t+u,x)| &\leq& \left|\sup_{P\in\cA(t,x,\Theta)} \mathbb{E}^P\left[v(t+u,X_{t+u})-v(t+u,x)\right]\right|\\[2mm]
&\leq & L\cdot \sup_{P\in\cA(t,x,\Theta)} \mathbb{E}^P\left[ |X_{t+u}-x| \right]\\[2mm]
&\leq& L\cdot c\cdot (u+u^{\nicefrac 12})
\end{array}
$$
with the constant $c=c(x,1)$ from Lemma \ref{lemma 1}.
Choosing a sequence $(t_n,x_n)$ converging to $(t,x)$ we have that
$$
\begin{array}{ccl}
|v(t_n,x_n)-v(t,x)| &\leq& |v(t_n,x_n)-v(t_n,x)|+|v(t_n,x)-v(t,x)|\\
&\leq& L |x_n-x|+L c(x,1) \left( |t_n-t|^{\nicefrac 12} +|t_n-t|\right)
\end{array}
$$
The statement follows for $n\rightarrow \infty.$
\end{proof}

\begin{proof} \emph{(of Theorem \ref{thm:Kolmogorov})}
	The proof essentially follows the well-known standard arguments in stochastic control, see e.g., the proof of \cite[Proposition~5.4]{NeufeldNutz2017}.
	
	By Lemma \ref{le:value-funct-cont}, $v(t,x)$ is continuous on $[0,T)\times \R$, and we have $v(T,x)= \psi(x)$ by the definition of $v$. We show that $v$ is a viscosity subsolution of the nonlinear affine PDE defined in \eqref{eq:def:PDE}; the supersolution property is proved similarly. We remark that in the subsequent lines within this proof, $c>0$ is a constant whose values may change from line to line.
	
	Let $(t,x) \in [0,T)\times \R$ and let $\varphi \in C^{2,3}_b([0,T)\times \R^d)$ be such that $\varphi\geq v$ and $\varphi(t,x)=v(t,x)$. By the dynamic programming principle obtained in Proposition~\ref{prop DPP}, we have for any $0<u<T-t$ that
	\begin{align}\label{eq:PDE-DPP}
	    0&=\sup_{P\in \cA(t,x,\Theta)}\E^P\big[v(t+u,X_{t+u})-v(t,x)\big] \nonumber\\
	     &\leq \sup_{P\in \cA(t,x,\Theta)}\E^P\big[\varphi(t+u,X_{t+u})-\varphi(t,x)\big]. 
	\end{align}
	Fix any $P\in \cA(t,x,\Theta)$, denote as above by  $(\beta^P,\alpha)$ the differential characteristics of the continuous semimartingale $X$ under $P$,
  and denote by $M^P$ the $P$-local martingale part of the $P$-semimartingale $X$.
	Then, It\^o's formula yields
	\begin{align}
	\varphi (t+u,&X_{t+u}) - \varphi(t,x) =
	\int_0^u \partial_t \varphi(t+s,X_{t+s}) \,ds
	+
	\int _0^u \partial_x \varphi(t+s,X_{t+s}) \,dM^P_{t+s} \nonumber\\
	& + \int_0^u \partial_x \varphi(t+s,X_{t+s}) \beta_{t+s}^P \,ds 
	+ \frac{1}{2} \int_0^u \partial_{xx} \varphi (t+s,X_{t+s}) \alpha_{t+s} \,ds. \label{eq:PDE1}
	\end{align}
	As $\varphi \in C_b^{2,3}([0,T)\times \R)$,  $\partial_x \varphi$ is uniformly bounded,%
	we see that for small enough $0<u<T-t$ the local martingale part in \eqref{eq:PDE1} is in fact a true martingale, starting at $0$. In particular, its expectation vanishes. 	The next step is to estimate the expectation of the other terms. In this regard, note that
	\begin{align}
	 \E^P\bigg[ \int_0^u &\partial_x \varphi(t+s,X_{t+s}) \beta^P_{t+s}\, ds\bigg]  \nonumber\\
	\leq & \ \int_0^u \E^P \bigg[ \big| \partial_x \varphi(t+s,X_{t+s}) - \partial_x \varphi (t,x)\big| \, |\beta^P_{t+s}| +
	\partial_x \varphi(t,x) \beta^P_{t+s} \bigg]\, ds. \label{eq:PDE2}
	\end{align}
	Since $\varphi \in C^{2,3}_b$, $\partial_x \varphi$ is Lipschitz. Hence, we obtain with the constant $\mathcal K = 1+ |\underline{b}^0|+|\underline{b}^1|+|\bar{b}^0|+|\bar{b}^1|+\bar{a}^0+\bar{a}^1$  together with 
	 Lemma~\ref{lemma 1} that for small enough $u$,
	\begin{align}
	 \ \int_0^u \E^P \Big[ \big| &\partial_x \varphi(t+s,X_{t+s}) - \partial_x \varphi (t,x)\big| \cdot |\beta_{t+s}^P| \Big]\, ds \nonumber\\
	\leq & \  c\int_0^u \E^P \Big[ \big(s+\sup_{0\leq v\leq u}|X_{t+v}-x|\big)  \cdot |\beta^P_{t+s}| \Big]\, ds \nonumber\\
	\leq & \  c\int_0^u \E^P \Big[ \big(s+\sup_{0\leq v\leq u}|X_{t+v}-x|\big)\, \big(\mathcal{K} +\mathcal{K} \sup_{0\leq v\leq u}|X_{t+v}|\big)\Big]\, ds \nonumber\\
	\leq & \  c\int_0^u \E^P \Big[ \big(s+\sup_{0\leq v\leq u}|X_{t+v}-x|\big)\, \big(\mathcal{K} +\mathcal{K} |x| +\mathcal{K} \sup_{0\leq v\leq u}|X_{t+v}-x|\big)\Big]\, ds \nonumber\\
	\leq & \ c\big(u^3 +u^{5/2}+ u^2 + u^{3/2}\big). \label{eq:PDE3}
	\end{align}
	Inserting \eqref{eq:PDE3} into \eqref{eq:PDE2} yields
	\begin{align} \label{eq:PDE4}
	  	   \lefteqn{ \E^P \bigg[ \int_0^u \partial_x \varphi(t+s,X_{t+s}) \beta^P_{t+s}\, ds\bigg]} \hspace{2cm} \nonumber\\
	   & \leq \int_0^u \E^P\Big[\partial_x \varphi(t,x) \,\beta^P_{t+s}\Big]\, ds + c\big(u^3 +u^{5/2}+ u^2 + u^{3/2}\big). 
	\end{align}
	The same argument applied to $\partial_{xx} \varphi$ leads to
	\begin{align}
	 \ \int_0^u \E^P \Big[ \big| &\partial_{xx} \varphi(t+s,X_{t+s}) - \partial_{xx}  \varphi (t,x)\big| \cdot |\alpha_{t+s}| \Big]\, ds \nonumber\\
	\leq & \  c\int_0^u \E^P \Big[ \big(s+\sup_{0\leq v\leq u}|X_{t+v}-x|\big)  \cdot |\alpha_{t+s}| \Big]\, ds \nonumber\\
	\leq & \  c\int_0^u \E^P \Big[ \big(s+\sup_{0\leq v\leq u}|X_{t+v}-x|\big)\, \big(\mathcal{K} +\mathcal{K} \sup_{0\leq v\leq u}|X_{t+v}|\big)^\gamma\Big]\, ds \nonumber\\
	\leq & \  c\int_0^u \E^P \Big[ \big(s+\sup_{0\leq v\leq u}|X_{t+v}-x|\big)\, \big(\mathcal{K} +\mathcal{K} \sup_{0\leq v\leq u}|X_{t+v}|\big)\Big]\, ds \nonumber\\
	\leq & \  c\int_0^u \E^P \Big[ \big(s+\sup_{0\leq v\leq u}|X_{t+v}-x|\big)\, \big(\mathcal{K} +\mathcal{K} |x| +\mathcal{K} \sup_{0\leq v\leq u}|X_{t+v}-x|\big)\Big]\, ds \nonumber\\
	\leq & \ c\big(u^3 +u^{5/2}+ u^2 + u^{3/2}\big). %
	\end{align}
	and we obtain that
	\begin{align} \label{eq:PDE5}
	\lefteqn{\E^P\bigg[ \int_0^u \partial_{xx} \varphi(t+s,X_{t+s})\, \alpha_{t+s}\, ds\bigg] } \hspace{2cm} \nonumber\\
	    & \leq \int_0^u \E^P\Big[\partial_{xx} \varphi(t,x)\, \alpha_{t+s}\Big]\, ds + c\big(u^3 +u^{5/2}+ u^2 + u^{3/2}\big).
	\end{align}
	Moreover, by a similar calculation, we have
	\begin{align}
	 \E^P\bigg[ \int_0^u \partial_t &\varphi(t+s,X_{t+s})\, ds\bigg]  \nonumber\\
	\leq & \ \int_0^u \partial_t \varphi (t,x) \,ds + \int_0^u \E^P \Big[ \big| \partial_t \varphi(t+s,X_{t+s}) - \partial_t \varphi (t,x)\big| \Big] \, ds \nonumber\\
	\leq & \ \int_0^u \partial_t \varphi (t,x) \,ds + c\int_0^u \E^P \Big[s+\sup_{0\leq v\leq u}|X_{t+v}-x| \Big]\, ds \nonumber\\
	\leq & \ \int_0^u \partial_t \varphi (t,x) \,ds + c\big(u^2 + u^{3/2}\big). \label{eq:PDE6}
	\end{align}
	As above, we write $\theta:=(b^0,b^1,a^0,a^1)$ for an element in $\Theta$.
	Then, by taking expectations in \eqref{eq:PDE1} and using \eqref{eq:PDE2}--\eqref{eq:PDE6}  yields
	\begin{align}
	\lefteqn{ \E^P\Big[\varphi (t+u,X_{t+u})- \varphi(t,x)\Big] 
	      \leq  c\big(u^3 +u^{5/2}+ u^2 + u^{3/2}\big) } \quad \nonumber\\
	      &+ \int_0^u \Big(\partial_t \varphi (t,x) + \E^P\big[
	       \partial_x \varphi(t,x) \,\beta^P_{t+s}
	       + \partial_{xx} \varphi(t,x)\, \alpha_{t+s}\big] \Big) \,ds \ \nonumber\\ 
          &\le  c\big(u^3 +u^{5/2}+ u^2 + u^{3/2}\big) + u \partial_t \varphi (t,x)\nonumber \\
          &+ \int_0^u \E^P\Big[ \sup_{\theta \in \Theta} \Big\{(b^0+b^1 X_{t+s})\,\partial_x \varphi(t,x) + \frac{1}{2} (a^0+a^1 X_{t+s}^+)\,\partial_{xx} \varphi(t,x) \Big\}\Big] .\label{eq:PDE7}
	\end{align}
	Here, the supremum turns out to be  $G(X_{t+s},\partial_x \varphi(t,x),\partial_{xx}\varphi(t,x))$. 
	Note that by the very definition of $G$,
	\begin{align*}
		G(X_{t+s},p,q) 
		& \le G(x,p,q) + \sup_{\theta \in \Theta}\Big\{ |b^1| \, |X_{t+s}-x| \, | p| +
		  |a^1| \, |X_{t+s}-x| \, | q|
		\Big\}.
	\end{align*}
	Therefore, by using that $\varphi \in C^{2,3}_b$, the definition of the constant $\mathcal K$  and Lemma~\ref{lemma 1}, we have
	\begin{align}
    	\lefteqn{\int_0^u \E^P\Big[ G(X_{t+s},\partial_x \varphi(t,x),\partial_{xx}\varphi(t,x)) \Big\}\Big] \,ds }\hspace{2cm} \nonumber \\
	    \leq & \ u  G(x,\partial_x \varphi(t,x),\partial_{xx}\varphi(t,x))
	    + u c\mathcal K \E^P\big[|X_{t+s}-x|\big] \nonumber\\
	    \leq & \ u  G(x,\partial_x \varphi(t,x),\partial_{xx}\varphi(t,x)) +  c\mathcal K \big(u^2 +u^{3/2}\big).  \label{eq:PDE8}  
	\end{align}
	Combining \eqref{eq:PDE7}--\eqref{eq:PDE8} yields
	\begin{align}
 \E^P\Big[\varphi (t+u,X_{t+u})- \varphi(t,x)\Big] 
	    &\leq  \  u \partial_t \varphi (t,x) +  u  G(x,\partial_x \varphi(t,x),\partial_{xx}\varphi(t,x))
	    \nonumber \\
	    & \ +  c\big(u^3 +u^{5/2}+ u^2 + u^{3/2}\big). \label{eq:PDE9}  
	\end{align}
	for some constant $c>0$ which is independent of $P$. As the choice of $P \in \cA(t,x,\Theta)$ was arbitrary, we deduce from
	\eqref{eq:PDE-DPP} that
	\begin{align}
	0 &\leq \sup_{P\in \cA(t,x,\Theta)}\E^P\big[\varphi(t+u,X_{t+u})-\varphi(t,x)\big]\nonumber\\
	&\leq u \partial_t \varphi (t,x) +  u  G(x,\partial_x \varphi(t,x),\partial_{xx}\varphi(t,x)) 
	 +  c\big(u^3 +u^{5/2}+ u^2 + u^{3/2}\big). \label{eq:PDE10}
	\end{align}
	By dividing first in \eqref{eq:PDE10} by $-u$ and then letting $u$ go to zero, we obtain that
	\begin{equation*} 
	- \partial_t \varphi (t,x) - G(x,\partial_x \varphi(t,x),\partial_{xx}\varphi(t,x)) \leq 0,
	\end{equation*}
	which proves that $v$ is indeed a viscosity subsolution as desired. 
\end{proof}

\end{document}